\newcommand{\texcomment}[1]{}
\newcommand{\comment}[1]{\textcolor{red}{{#1}}}
\newcommand{\todo}{\comment{TODO}}
\newenvironment{reftheorem}[1]{\begin{trivlist}\item[\hskip
      \labelsep{\bf Theorem #1.}]\it}{\end{trivlist}}
\newcommand{\rest}{\mathbin{\restriction}}
\newcommand{\ve}[1]{\boldsymbol{#1}}
\newcommand{\compcirc}[2]{({#1})\mathbin{\rhd}{#2}}
\newcommand{\comptblm}[3]{{#1}\mathbin{\rhd}_{#3}{#2}}
\newcommand{\comptbl}[2]{{#1}\mathbin{\rhd}{#2}}
\newcommand{\smallcolon}{\hspace*{-2pt}:\hspace*{-2pt}}
\newcommand{\aset}[1]{\{{#1}\}}
\newcommand{\aseq}[1]{\langle{#1}\rangle}
\newcommand{\tbl}[0]{{\it tbl}}
\newcommand{\brest}[3]{{#1}_{#2}({#3})}
\newcommand{\nd}[0]{\alpha}
\newcommand{\ndb}[0]{\beta}
\newcommand{\reduce}[2]{\xleftarrow{({#1},{#2})}}
\newcommand{\reductions}[1]{\leftarrow^{#1}}
\newcommand{\rows}[1]{{\it \#rows}({#1})}
\newcommand{\cols}[1]{{\it \#cols}({#1})}
\newcommand{\hd}[2]{{\it hd}({#1},{#2})}
\newcommand{\tl}[2]{{\it tl}({#1},{#2})}
\newcommand{\splitouts}[3]{{\it outs}_{{#1},{#2}}({#3})}
\newcommand{\nonsplitvals}[1]{{\it nsv}({#1})}
\newcommand{\match}[2]{\mathbin{\succeq_{#1,#2}}}
\newcommand{\eqtbl}[0]{\mathbin{\simeq}}
\newcommand{\safe}[3]{{#1}\mathbin{\sim}{#2,#3}}
\newcommand{\dettbl}[2]{{#1}\mathbin{\sim}{#2}}
\newcommand{\pvar}{p}
\newcommand{\svar}{k}
\newcommand{\rvar}{r}
\newcommand{\ovar}{o}
\newcommand{\var}{v}
\newcommand{\nodes}[1]{{\it nodes}({#1})}
\newcommand{\store}{\nu}
\newcommand{\eval}[3]{{#1}_{#2}({#3})}
\newcommand{\dist}[4]{\mu_{#1}({#2},{#3},{#4})}
\newcommand{\cnt}[5]{\#_{#1}({#2},{#3},{#4},{#5})}
\newcommand{\plusone}[1]{\({#1}\)\(+\)\(1\)}
\newcommand{\parcomp}[2]{{#1}||{#2}}
\newcommand{\phip}{\Phi_P}
\newcommand{\phiskel}[1]{\Phi_{\textit{Sk}_{#1}}}
\newcommand{\skel}[1]{\textit{Sk}_{#1}}
\newcommand{\phiioeq}{\Phi_{\textsc{io}}}
\newcommand{\philr}{\Phi_{\textsc{lr}}}
\newcommand{\cvar}{C}
\newcommand{\testset}{\textit{tset}}
\newcommand{\tst}{\ve{b_\pvar}\ve{b_\svar}}
\newcommand{\tstprime}{\ve{b_\pvar}\ve{b_\svar'}}
\newcommand{\rspace}{\mathcal{R}}
\newcommand{\depins}[1]{\textit{deps}({#1})}
\newcommand{\nodecount}{\gamma}
\newcommand{\nosol}{\textsf{nosol}}
\newcommand{\sol}{\textsf{sol}}
\newcommand{\success}{\textsf{success}}
\newcommand{\cex}{\textsf{cex}}
\newcommand{\assign}{\textbf{:=}}
\newcommand{\mult}{\mathbin{\cdot}}
\newcommand{\legname}{name}
\newcommand{\legtime}{time}
\newcommand{\legsize}{size}
\newcommand{\legnumrands}{rds}
\newcommand{\legmaxtpc}{mtc}
\newcommand{\nodesel}[2]{{#1}\aseq{#2}}
\newcommand{\addappendix}[2]{#1}
\begin{document}

\title{Compositional Synthesis of Leakage Resilient Programs}
\author{Arthur Blot\inst{1} \and Masaki Yamamoto\inst{2} \and Tachio Terauchi\inst{3}}
\institute{ENS Lyon, Lyon, France\\ \email{arthur.blot@ens-lyon.fr}
\and Nagoya University, Nagoya, Japan\\ \email{yamamoto-m@sqlab.jp}
  \and JAIST, Nomi, Japan\\\email{terauchi@jaist.ac.jp}}

\maketitle

\begin{abstract}
  A promising approach to defend against side channel attacks is to
  build programs that are {\em leakage resilient}, in a formal sense.
  One such formal notion of leakage resilience is the {\em
    $n$-threshold-probing model} proposed in the seminal work by Ishai
  et al.~\cite{DBLP:conf/crypto/IshaiSW03}.  In a recent
  work~\cite{DBLP:conf/cav/EldibW14}, Eldib and Wang have proposed a
  method for automatically synthesizing programs that are leakage
  resilient according to this model, for the case $n=1$.  In this
  paper, we show that the $n$-threshold-probing model of leakage
  resilience enjoys a certain compositionality property that can be
  exploited for synthesis.  We use the property to design a synthesis
  method that efficiently synthesizes leakage-resilient programs in a
  compositional manner, for the general case of $n > 1$.  We have
  implemented a prototype of the synthesis algorithm, and we
  demonstrate its effectiveness by synthesizing leakage-resilient
  versions of benchmarks taken from the literature.
\end{abstract}

\section{Introduction}


Side channel attacks are well recognized as serious threat to the
security of computer systems.  Building a system that is resilient to
side channel attacks is a challenge, particularly because there are
many kinds of side channels (such as, power, timing, and
electromagnetic radiation) and attacks on them.  In an effort to
establish a principled solution to the problem, researchers have
proposed formal definitions of resilience against side channel
attacks, called {\em leakage
  resilience}~\cite{DBLP:conf/crypto/IshaiSW03,DBLP:conf/eurocrypt/IshaiPSW06,DBLP:conf/eurocrypt/FaustRRTV10,DBLP:conf/crypto/GoldwasserR10,DBLP:conf/focs/GoldwasserR12,DBLP:conf/asiacrypt/BalaschFGV12}.
The benefit of such formal models of side-channel-attack resilience is
that a program proved secure according to a model is guaranteed to be
secure against all attacks that are permitted within the model.

The previous research has proposed various notions of leakage
resilience.  In this paper, we focus on the {\em $n$-threshold-probing
  model} proposed in the seminal work by Ishai et
al.~\cite{DBLP:conf/crypto/IshaiSW03}.  Informally, the model says
that, given a program represented as a Boolean circuit, the adversary
learns nothing about the secret by executing the program and observing
the values of at most $n$ nodes in the circuit
(cf.~Section~\ref{sec:prelim} for the formal definition).
\texcomment{Attacks such as differential power
  analysis~\cite{DBLP:conf/crypto/KocherJJ99} can be formally
  prevented by the model because they are reducible to computing
  intermediate values in the program.}  The attractive features of the
model include its relative simplicity, and the relation to {\em
  masking}, a popular countermeasure technique used in security
practice.  More precisely, the security under the
$n$-threshold-probing model is equivalent to the security under {\em
  $n\textsuperscript{th}$-order masking}~\cite{DBLP:conf/ches/RivainP10}, and often, the
literature uses the terminologies
interchangeably~\cite{DBLP:conf/asiacrypt/BalaschFGV12,DBLP:conf/fse/CoronPRR13,DBLP:conf/tacas/EldibWS14,DBLP:conf/cav/EldibW14,DBLP:conf/eurocrypt/BartheBDFGS15,DBLP:journals/iacr/BartheBDFG15}.
Further, as recently shown by Duc et
al.~\cite{DBLP:conf/eurocrypt/DucDF14}, the security under the model
also implies the security under the {\em noisy} leakage
model~\cite{DBLP:conf/eurocrypt/ProuffR13} in which the adversary
obtains information from every node with a probabilistically
distributed noise.

In a recent work, Eldib and Wang~\cite{DBLP:conf/cav/EldibW14} have
proposed a synthesis method that, given a program represented as a
circuit, returns a functionally equivalent circuit that is leakage
resilient according to the $n$-threshold-probing model, for the case
$n=1$ (i.e., the adversary observes only one node).  The method is a
constraint-based algorithm whereby the constraints expressing the
necessary conditions are solved in a CEGAR (counterexample-guided
abstraction refinement) style.  In this work, we extend the synthesis
to the general case where $n$ can be greater than $1$.  Unfortunately,
naively extending (the monolithic version of) their algorithm to the
case $n > 1$ results in a method whose complexity is double
exponential in $n$, leading to an immediate roadblock.\footnote{Their
  paper~\cite{DBLP:conf/cav/EldibW14} also shows a compositional
  algorithm.  However, compositionality becomes non-trivial when $n >
  1$ because then the adversary can observe nodes from the different
  components of the composition.}  As we show empirically in
Section~\ref{sec:impl}, the cost is highly substantial, and the naive
monolithic approach fails even for the case $n = 2$ on reasonably
simple examples.

Our solution to the problem is to exploit a certain {\em
  compositionality} property admitted by the leakage resilience model.
We state and prove the property formally in Theorems~\ref{thm:parcomp}
and \ref{thm:seqmultinput}.  Roughly, the compositionality theorems
say that composing $n$-leakage-resilient circuits results in an
$n$-leakage-resilient circuit, under the condition that the randoms in
the components are disjoint.  The composition property is quite
general and is particularly convenient for synthesis.  It allows a
compositional synthesis method which divides the given circuit into
smaller sub-circuits, synthesizes $n$-leakage-resilient versions of
them, and combines the results to obtain an $n$-leakage-resilient
version of the whole.  The correctness is ensured by using disjoint
randoms in the synthesized sub-circuits.  Our approach is an
interesting contrast to the approach that aims to achieve
compositionality without requiring the disjointness of the component's
randoms, but instead at the cost of additional randoms at the site of
the
composition~\cite{DBLP:conf/fse/CoronPRR13,DBLP:journals/iacr/BartheBDFG15}.

We remark that the compositionality is not at all obvious and quite
unexpected.  Indeed, at first glance, $n$-leakage resilience for each
individual component seems to say nothing about the security of the
composed circuit against an adversary who can observe the nodes of
multiple different components in the composition.  To further
substantiate the non-triviality, we remark that the compositionality
property is quite sensitive, and for example, it fails to hold if the
bounds are relaxed even slightly so that the adversary makes at most
$n$ observations within each individual component but the total number
of observations is allowed to be just one more than $n$
(cf.~Example~\ref{ex:seqtight}).

To synthesize $n$-leakage-resilient sub-circuits, we extend the
monolithic algorithm from \cite{DBLP:conf/cav/EldibW14} to the case
where $n$ can be greater than $1$.  We make several improvements to
the baseline algorithm so that it scales better for the case $n > 1$
(cf.~Section~\ref{sec:mono}).  We have implemented a prototype of our
compositional synthesis algorithm, and experimented with the
implementation on benchmarks taken from the literature.  We summarize
our contributions below.
\begin{itemize}
\item A proof that the $n$-threshold-probing model of leakage
  resilience is preserved under certain circuit compositions
  (Section~\ref{sec:comp}).

\item A compositional synthesis algorithm for the leakage-resilience
  model that utilizes the compositionality property
  (Section~\ref{sec:alg}).

\item Experiments with a prototype implementation of the synthesis
  algorithm (Section~\ref{sec:impl}).

\end{itemize}

The rest of the paper is organized as follows.
Section~\ref{sec:prelim} introduces preliminary definitions and
notations, including the formal definition of the
$n$-threshold-probing model of leakage resilience.
Section~\ref{sec:comp} states and proves the compositionality
property.  Section~\ref{sec:alg} describes the compositional synthesis
algorithm.  We report on the experience with a prototype
implementation of the algorithm in Section~\ref{sec:impl}, and discuss
related work in Section~\ref{sec:related}.  We conclude the paper in
Section~\ref{sec:concl}.  \addappendix{Appendix}{The extended report~\cite{extendedreport}}
contains the omitted proofs and extra materials.

\section{Preliminaries}

\label{sec:prelim}

We use boldface font for finite sequences.  For example, $\ve{b} =
b_1,b_2,\dots,b_n$.  We adopt the standard convention of the
literature~\cite{DBLP:conf/tacas/EldibWS14,DBLP:conf/cav/EldibW14,DBLP:conf/eurocrypt/BartheBDFGS15,DBLP:journals/iacr/BartheBDFG15}
and assume that a {\em program} is represented as an acyclic Boolean
circuit.\footnote{In the implementation described in
  Section~\ref{sec:impl}, following the previous
  works~\cite{DBLP:conf/tacas/EldibWS14,DBLP:conf/cav/EldibW14}, we
  convert the given C program into such a form.}  We assume the usual
Boolean operators, such as XOR gates $\oplus$, AND gates $\wedge$, OR
gates $\vee$, and NOT gates $\neg$.

A program has three kinds of inputs, {\em secret inputs} (often called
{\em keys}) ranged over by $\svar$, {\em public inputs} ranged over by
$\pvar$, and {\em random inputs} ranged over by $\rvar$.  Informally,
secret inputs contain the secret bits to be protected from the
adversary, public inputs are those that are visible and possibly given
by the adversary, and random inputs contain bits that are generated
uniformly at random (hence, it may be more intuitive to view randoms
as not actual ``inputs'').

Consider a program $P$ with secret inputs $\svar_1,\dots,\svar_x$.  In
the $n$-threshold-probing model of leakage resilience, we prepare {\em
  \plusone{n}-split shares} of each $\svar_i$ (for $i \in
\aset{1,\dots,x}$):
\[
\rvar_{i,1}, \ \dots, \ \rvar_{i,n}, \ \svar_i \oplus (\bigoplus_{j=1}^n \rvar_{i,j})
\]
where each $\rvar_{i,j}$ is fresh.  Note that the split shares sum to
$\svar_i$, and (assuming that $\rvar_{i,j}$'s are uniformly
independently distributed) observing up to $n$ many shares reveals no
information about $\svar_i$.  Adopting the standard terminology of the
literature~\cite{DBLP:conf/crypto/IshaiSW03}, we call the circuit that
outputs such split shares the {\em input encoder} of $P$.  The leakage
resilience model also requires an {\em output decoder}, which sums the
split shares at the output.  More precisely, suppose $P$ has $y$ many
\plusone{n}-split outputs $\ve{\ovar_1},\dots,\ve{\ovar_y}$ (i.e.,
$|\ve{\ovar_i}| = n+1$ for each $i \in \aset{1,\dots,y}$).  Then, the
output decoder for $P$ is, for each $\ve{\ovar_i} =
\ovar_{i,1},\dots,\ovar_{i,n+1}$, the circuit $\bigoplus_{j=1}^{n+1}
\ovar_{i,j}$.  For example, Fig.~\ref{fig:ex} shows a
\plusone{2}-split circuit with the secret inputs $\svar_1,\svar_2$,
public inputs $\svar_1,\svar_2$, random inputs
$\rvar_1,\rvar_2,\rvar_3,\rvar_4$, and two outputs.  Note that the
input encoder (the region \textbf{Input Encoder}) introduces the
randoms, and the output decoder (the region \textbf{Output
  Decoder}) sums the output split shares.
\begin{wrapfigure}{r}{8cm}
\vspace{-0.4cm}
\begin{center}
\includegraphics[width=8cm]{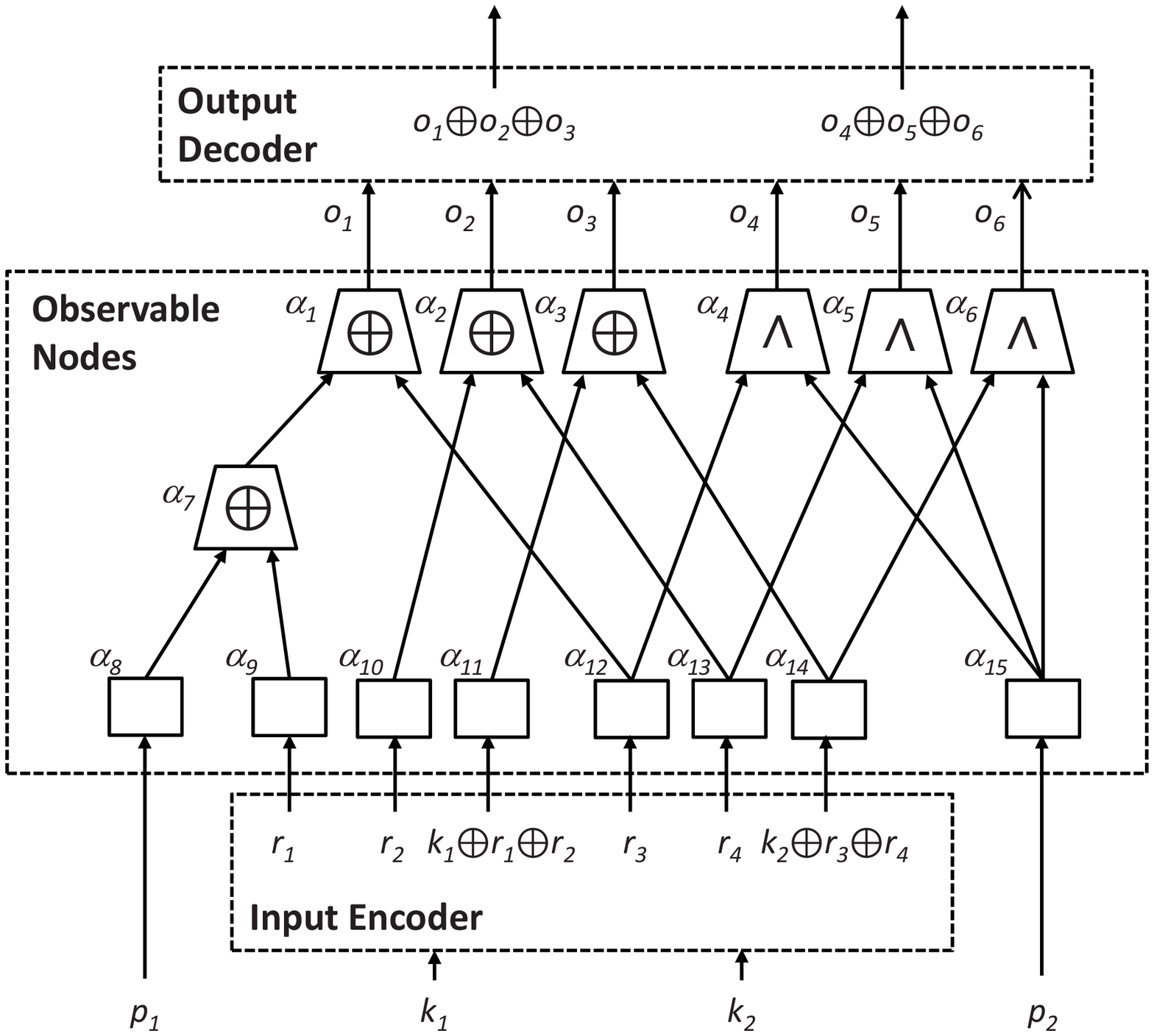}
\caption{Example of a 2-leakage-resilient circuit}
 \label{fig:ex}
\end{center}
\vspace{-1cm}
\end{wrapfigure}

We associate a unique label (ranged over by $\nd$) to every gate of
the circuit.  We call the {\em nodes} of $P$, $\nodes{P}$, to be the
set of labels in $P$ excluding the gates internal to the input encoder
and the output decoder part of $P$ (but including the outputs of the
input encoder and the inputs to the output decoder).  Intuitively,
$\nodes{P}$ are the nodes that can be observed by the adversary. For
example, in Fig.~\ref{fig:ex}, the observable nodes are the ones in
the region \textbf{Observable Nodes} labeled $\nd_1,\dots,\nd_{15}$.

Let $\store$ be a mapping from the inputs of $P$ to $\aset{0,1}$.  Let
$\ve{\nd} = \nd_1,\dots,\nd_n$ be a vector of nodes in $\nodes{P}$.
We define the {\em evaluation}, $\eval{\store}{P}{\ve{\nd}}$, to be
the vector $b_1,\dots,b_n \in \aset{0,1}^n$, such that each $b_i$ is
the valuation of the node $\nd_i$ when evaluating $P$ under $\store$.
For example, let $P'$ be the circuit from Fig.~\ref{fig:ex}.  Let
$\store$ map $\pvar_1, \rvar_1,\svar_2$ to $0$, and the others to $1$.
Then, $\eval{\store}{P'}{\nd_{11},\nd_1} = 0,1$.

Let us write $\store[\ve{\var} \mapsto \ve{b}]$ for the store $\store$
except that each $\var_i$ is mapped to $b_i$ (for $i \in \aset{1,\dots
  m}$) where $\ve{\var} = \var_1,\dots,\var_m$ and $\ve{b} =
b_1,\dots,b_m$.  Let $P$ be a circuit with secret inputs $\ve{\svar}$,
public inputs $\ve{\pvar}$, and random inputs $\ve{\rvar}$.  For
$\ve{b_\pvar} \in \aset{0,1}^{|\pvar|}$, $\ve{b_\svar} \in
\aset{0,1}^{|\svar|}$, $\ve{\nd} \in \nodes{P}^*$, and $\ve{b_\nd} \in
\aset{0,1}^{|\ve{\nd}|}$, let
\(
\cnt{P}{\ve{b_\pvar}}{\ve{b_\svar}}{\ve{\nd}}{\ve{b_\nd}} =
|\aset{\ve{b} \in \aset{0,1}^{|\ve{\rvar}|} \mid
  \eval{\store[\ve{\rvar}\mapsto\ve{b}]}{P}{\ve{\nd}} = \ve{b_\nd}}|
\)
where $\store =
\aset{\ve{\pvar}\mapsto\ve{b_\pvar},\ve{\svar}\mapsto\ve{b_\svar}}$.
We define $\dist{P}{\ve{b_\pvar}}{\ve{b_\svar}}{\ve{\nd}}$ to be the finite map
from each $\ve{b_\nd}\in \aset{0,1}^{|\ve{\nd}|}$ to
$\cnt{P}{\ve{b_\pvar}}{\ve{b_\svar}}{\ve{\nd}}{\ve{b_\nd}}$.  We
remark that $\dist{P}{\ve{b_\pvar}}{\ve{b_\svar}}{\ve{\nd}}$, when
normalized by the scaling factor $2^{-|\ve{\rvar}|}$, is the joint
distribution of the values of the nodes $\ve{\nd}$ under the public
inputs $\ve{b_\pvar}$ and the secret inputs $\ve{b_\svar}$.

Roughly, the $n$-threshold-probing model of leakage resilience says
that, for any selection of $n$ nodes, the joint distribution of the
nodes' values is independent of the secret.  Formally, the
leakage-resilience model is defined as follows.
\begin{definition}[Leakage Resilience]
  \em Let $P$ be an \plusone{n}-split circuit with secret inputs
  $\ve{\svar}$, public inputs $\ve{\pvar}$, and random inputs
  $\ve{\rvar}$.  Then, $P$ is said to be {\em leakage-resilient under
    the $n$-threshold-probing model} (or, simply {\em
    $n$-leakage-resilient}) if for any $\ve{b_\pvar} \in
  \aset{0,1}^{|\ve{\pvar}|}$, $\ve{b_\svar} \in
  \aset{0,1}^{|\ve{\svar}|}$, $\ve{b_\svar}' \in
  \aset{0,1}^{|\ve{\svar}|}$, and $\ve{\nd} \in \nodes{P}^n$,
  $\dist{P}{\ve{b_\pvar}}{\ve{b_\svar}}{\ve{\nd}} =
  \dist{P}{\ve{b_\pvar}}{\ve{b_\svar}'}{\ve{\nd}}$.
\end{definition}
We remark that, above, $\ve{\rvar}$ includes all randoms introduced by
the input encoder as well as any additional ones that are not from the
input encoder, if any.  For instance, in the case of the circuit from
Fig.~\ref{fig:ex}, the randoms are $\rvar_1,\rvar_2,\rvar_3,\rvar_4$
and they are all from the input encoder.

\begin{wrapfigure}{r}{12.5em}
\vspace{-1cm}
\begin{center}
\includegraphics[width=10em]{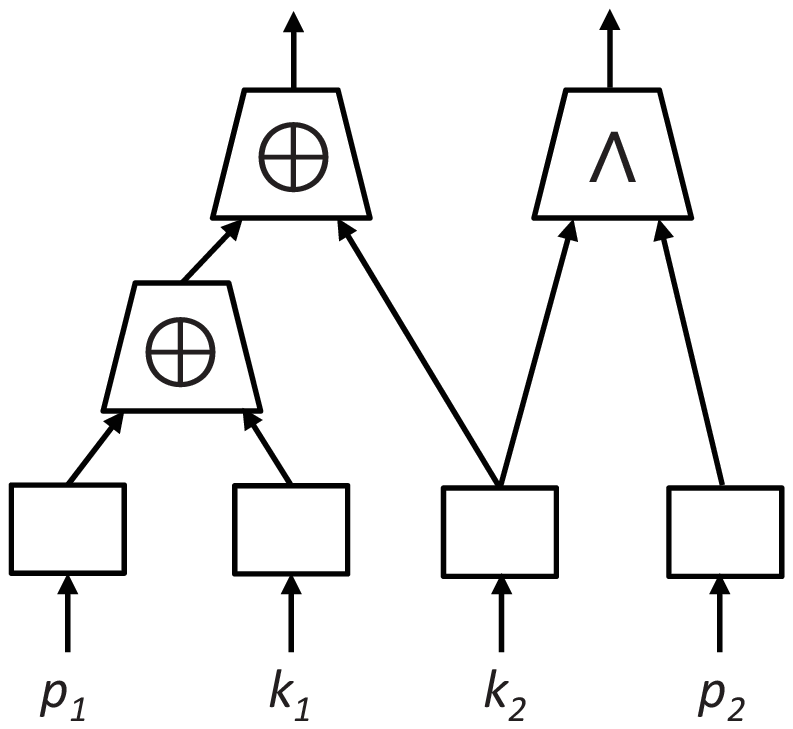}
\caption{A circuit computing $(\pvar_1\oplus\svar_1\oplus\svar_2, \svar_2\wedge\pvar_2)$}
 \label{fig:exorig}
\end{center}
\vspace{-1cm}
\end{wrapfigure}
Informally, the $n$-threshold-probing model of leakage resilience says
that the attacker learns nothing about the secret by executing the
circuit and observing the values of up to $n$ many internal gates and
wires, excluding those that are internal to the input
encoder and the output decoder.

We say that a circuit is {\em random-free} if it has no randoms.  Let
$P$ be a random-free circuit with public inputs $\ve{\pvar}$ and
secret inputs $\ve{\svar}$, and $P'$ be a circuit with public inputs
$\ve{\pvar}$, secret inputs $\ve{\svar}$, and randoms $\ve{\rvar}$.
We say that $P'$ is {\em IO-equivalent} to $P$ if for any
$\ve{b_\svar} \in \aset{0,1}^{|\ve{\svar}|}$, $\ve{b_\pvar} \in
\aset{0,1}^{|\ve{\pvar}|}$, and $\ve{b_\rvar} \in
\aset{0,1}^{|\ve{\rvar}|}$, the output of $P$ when evaluated under
$\store = \aset{\ve{\pvar} \mapsto \ve{b_\pvar},\ve{\svar} \mapsto
  \ve{b_\svar}}$ is equivalent to that of $P'$ when evaluated under
$\store[\ve{\rvar} \mapsto \ve{b_\rvar}]$.  We formalize the synthesis
problem.
\begin{definition}[Synthesis Problem]
  \em Given $n > 0$ and a random-free circuit $P$ as input, the {\em synthesis problem} is the problem of
  building a circuit $P'$ such that 1.) $P'$ is IO-equivalent to $P$,
and 2.) $P'$ is $n$-leakage-resilient.
\end{definition}

An important result by Ishai et al.~\cite{DBLP:conf/crypto/IshaiSW03}
is that any random-free circuit can be converted to an IO-equivalent
leakage-resilient form.
\begin{theorem}[\cite{DBLP:conf/crypto/IshaiSW03}]
\label{thm:ishai03}
  For any random-free circuit $P$, there exists an
  $n$-leakage-resilient circuit that is IO-equivalent to $P$.
\end{theorem}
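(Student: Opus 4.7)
The plan is to follow the classical construction of Ishai, Sahai, and Wagner, transforming $P$ into $P'$ gate-by-gate while working throughout on $(n+1)$-split representations of every wire. The outer structure of $P'$ is fixed by the definitions in Section~\ref{sec:prelim}: the input encoder prepares, for each secret $\svar_i$, a uniform $(n+1)$-sharing $\svar_i = \bigoplus_{j=0}^{n} a_{i,j}$, and the output decoder XORs the $(n+1)$ shares of each shared output. It remains to define, for each gate type of $P$, a \emph{gadget} that takes shared inputs and produces a shared output representing the gate's value.

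For NOT I would flip a single share: $\neg a_0, a_1, \dots, a_n$ is a valid sharing of $\neg (\bigoplus_j a_j)$. For XOR I would take share-wise XOR. For AND, the only interesting case, given shared inputs $\ve{a}$ and $\ve{b}$, I would introduce $\binom{n+1}{2}$ fresh randoms $r_{i,j}$ for $0 \le i < j \le n$, set $r_{j,i} = (r_{i,j} \oplus (a_i \wedge b_j)) \oplus (a_j \wedge b_i)$, and define output shares $c_i = (a_i \wedge b_i) \oplus \bigoplus_{j \ne i} r_{i,j}$. A direct calculation gives $\bigoplus_i c_i = (\bigoplus_i a_i) \wedge (\bigoplus_i b_i)$, so the output is a valid sharing of the AND. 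Chaining these gadgets in topological order yields a circuit $P'$ whose decoded outputs match those of $P$, establishing IO-equivalence by a straightforward induction.

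The hard part is proving $n$-leakage resilience of $P'$. I would use the standard simulator argument: given any vector $\ve{\nd}$ of at most $n$ observable nodes of $P'$, I exhibit, for each secret $\svar_i$, a set $S_i \subseteq \aset{0,1,\dots,n}$ of at most $n$ input-share indices such that the joint distribution of $\eval{\store}{P'}{\ve{\nd}}$ (over the internal randoms of $P'$) can be reproduced from the values $\aset{a_{i,j} \mid j \in S_i}$ alone. Since any $n$ shares of a uniform $(n+1)$-split are uniform and independent of the underlying secret, this yields $\dist{P'}{\ve{b_\pvar}}{\ve{b_\svar}}{\ve{\nd}} = \dist{P'}{\ve{b_\pvar}}{\ve{b_\svar}'}{\ve{\nd}}$ for all $\ve{b_\svar}, \ve{b_\svar}'$. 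The sets $S_i$ are constructed by propagating the probes backwards through the gadgets; the delicate step is the AND gadget, where one must verify that each additional probe within it, or on one of its output wires, enlarges the required index sets for its two shared inputs by at most one each. Controlling this bookkeeping so that $n$ total probes yield at most $n$ input-share dependencies per secret is the technical heart of the proof and is precisely the counting argument carried out in detail by Ishai et al.
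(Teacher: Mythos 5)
The paper does not prove this theorem---it is a citation theorem, taken as given from Ishai, Sahai, and Wagner, and no proof appears in the body or in the appendix (the appendix contains proofs only for Theorems~\ref{thm:parcomp}, \ref{thm:seqmultinput}, and the supporting lemmas). So there is no argument in the paper to compare yours against; your sketch is essentially a restatement of the construction and security argument of the cited work, which is the appropriate move for a theorem labeled as an external result.

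That said, one point deserves a concrete warning, because it is exactly the kind of subtlety the rest of this paper (Section~\ref{sec:comp}, Example~\ref{ex:seqtight}) is at pains to highlight. You carry out the transformation with $(n+1)$-share wires and then assert that backward propagation of probes through a chain of gadgets yields ``at most $n$ input-share dependencies per secret.'' This is the crux, and it is not an automatic consequence of the per-gadget count you state. The per-gadget claim (``each probe enlarges the required index sets by at most one'') is a non-interference-style bound; but when gadgets are composed, probes placed on one gadget's \emph{output} wires become probes on the next gadget's \emph{input} wires and must be propagated further, so the counting can compound. This is precisely the danger that Example~\ref{ex:seqtight} in the paper exhibits for its own notion of composition. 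In the original ISW paper the bookkeeping is settled by working with $2t+1$ shares to achieve $t$-privacy; obtaining $t$-probing security with only $t+1$ shares for the full gate-by-gate transformation requires a sharper compositional property of the multiplication gadget (what later literature calls strong non-interference, with refresh gadgets inserted where needed), which postdates the cited paper. Your sketch implicitly assumes the stronger $(n+1)$-share bound without flagging that the ``straightforward'' counting does not close the composition step on its own. Either switch to $2n+1$ shares and invoke ISW's theorem directly, or explicitly note that the $(n+1)$-share version relies on the strong non-interference property of the AND gadget and a careful composition argument, not merely the per-gadget ``plus one'' bound.
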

While the result is of theoretical importance, the construction is
more of a proof-of-concept in which every gate is transformed
uniformly, and the obtained circuits can be quite unoptimized (e.g.,
injecting excess randoms to mask computations that do not depend on
secrets).  The subsequent research has proposed to construct more
optimized leakage-resilient circuits
manually~\cite{DBLP:conf/ches/RivainP10,DBLP:conf/fse/CoronPRR13}, or
by automatic synthesis~\cite{DBLP:conf/cav/EldibW14}.  The latter is
the direction of the present paper.

\begin{example}
Consider the random-free circuit $P$ shown in Fig.~\ref{fig:exorig}
which outputs
$(\pvar_1\oplus\svar_1\oplus\svar_2,\svar_2\wedge\pvar_2)$.  Let $P'$
be the circuit from Fig.~\ref{fig:ex}.  It is easy to see that $P'$
is IO-equivalent to $P$.  Also, it can be shown that $P'$ is
$2$-leakage resilient.  Therefore, $P'$ is a $2$-leakage-resilient
version of $P$.
\end{example}

\begin{remark}
  The use of the input encoder and the output decoder is
  unavoidable.  It is easy to see that the input encoder is needed.
  Indeed, without it, one cannot even defend against an
  one-node-observing attacker as she may directly observe the secret.
  To see that the output decoder is also required, consider an
  one-output circuit without the output decoder and let $n$ be the
  fan-in of the last gate before the output.  Then, assuming that the
  output depends on the secret, the circuit cannot defend itself
  against an $n$-nodes-observing attacker as she may observe the
  inputs to the last gate.
\end{remark}

\begin{remark}
\label{rem:multenc}
In contrast to the previous
works~\cite{DBLP:conf/fse/CoronPRR13,DBLP:journals/iacr/BartheBDFG15}
that implicitly assume that each secret is encoded (i.e., split in
\plusone{n} shares) by only one input encoder, we allow a secret to be
encoded by multiple input encoders.  The relaxation is important in
our setting because, as remarked before, the compositionality results
require disjointness of the randoms in the composed components.
\end{remark}

\noindent{\bf Split and Non-Split Inputs / Outputs.}
We introduce terminologies that are convenient when describing the
compositionality results in Section~\ref{sec:comp}.  We use the term
{\em split inputs} to refer to the \plusone{n} tuples of wires to
which the \plusone{n}-split (secret) inputs produced by the input
encoder (i.e., the pair of triples
$\rvar_1,\rvar_2,\svar_1\oplus\rvar_1\oplus\rvar_2$ and
$\rvar_3,\rvar_4,\svar_2\oplus\rvar_3\oplus\rvar_4$ in the example of
Fig.~\ref{fig:ex}) are passed, and use the term {\em non-split inputs}
to refer to the wires to which the original inputs before the split
(i.e., $\svar_1$ and $\svar_2$ in Fig.~\ref{fig:ex}) are passed.  We
define {\em split outputs} and {\em non-split outputs} analogously.
Roughly, the split inputs and outputs are the inputs and outputs
of the attacker-observable part of the circuit (i.e., the region {\bf
  Observable Nodes} in Fig.~\ref{fig:ex}), whereas the non-split
inputs and outputs are those of the whole circuit with the
input encoder and the output decoder.

\section{Compositionality of Leakage Resilience}
\label{sec:comp}

\begin{wrapfigure}{r}{15em}
\vspace{-1.2cm}
\begin{center}
\includegraphics[width=7.5em]{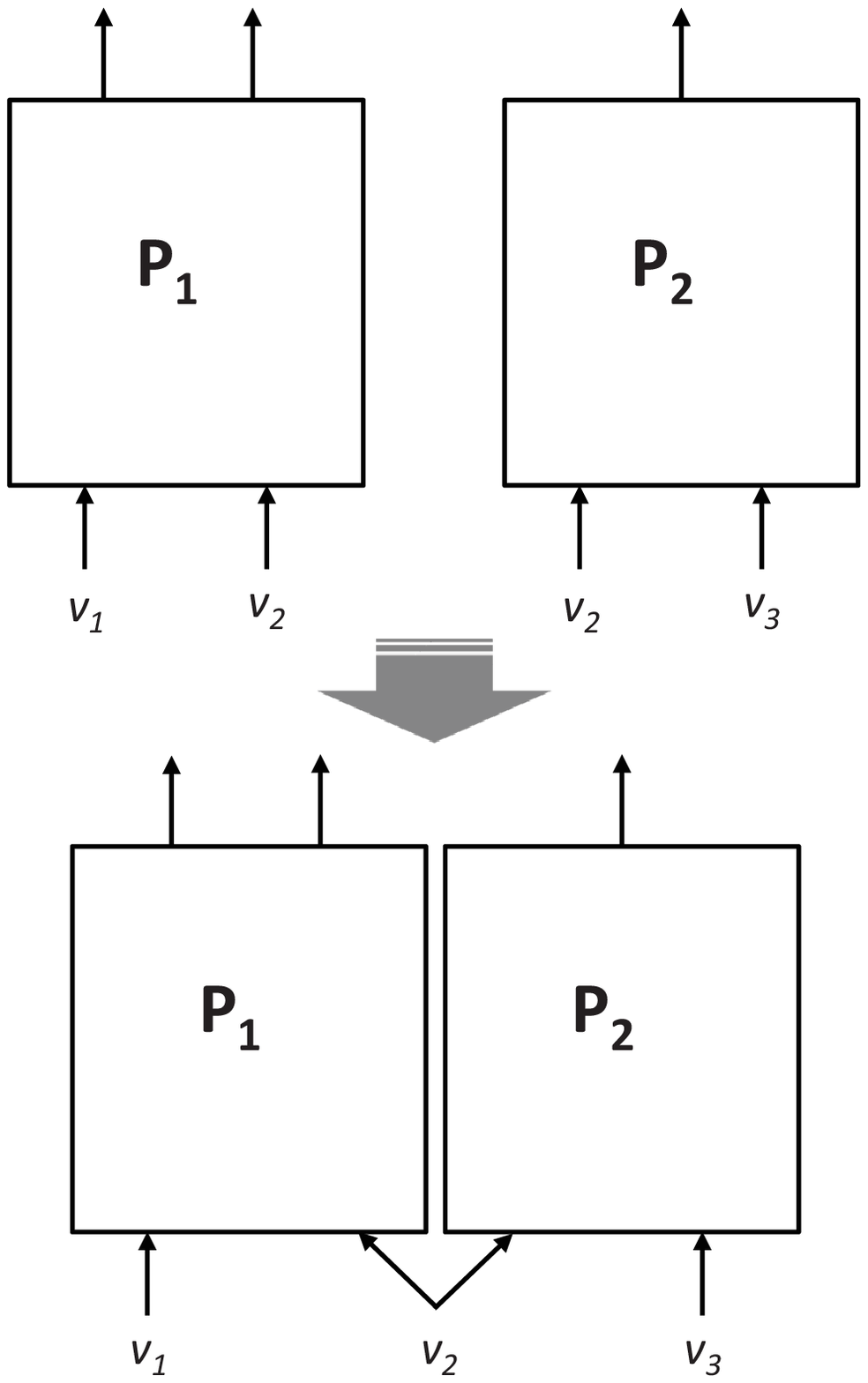}
\caption{Parallel composition of $P_1$ and $P_2$.}
 \label{fig:parcomp}
\end{center}
\vspace{-1.2cm}
\end{wrapfigure}
This section shows the compositionality property of the
$n$-threshold-probing model of leakage resilience.  We state and prove
two main results (for space the proofs are deferred to
\addappendix{Appendix}{the extended report~\cite{extendedreport}}).  

The first result concerns {\em parallel compositions}.  It shows that
given two $n$-leakage-resilient circuits $P_1$ and $P_2$ that possibly
share inputs, the composed circuit that runs $P_1$ and $P_2$ in
parallel is also $n$-leakage resilient, assuming that the randoms in
the two components are disjoint.  Fig.~\ref{fig:parcomp} shows the
diagram depicting the composition.  The second result concerns {\em
  sequential compositions}, and it is significantly harder to prove
than the first one.  The sequential composition takes an
$n$-leakage-resilient circuit $P_2$ having $y$ many (non-split)
inputs, and $n$-leakage-resilient circuits $P_{11},\dots,P_{1y}$ each
having one (non-split) output.  The composition is done by connecting
each split output of the output-decoder-free part of $P_{1i}$ to the
$i$th split input of the input-encoder-free part of $P_2$.  Clearly,
the composed circuit is IO-equivalent to the one formed by connecting
each non-split output of $P_{1i}$ to the $i$th non-split input of
$P_2$.  The sequential compositionality result states that the
composed circuit is also $n$-leakage resilient, under the assumption
that the randoms in the composed components are disjoint.
Fig.~\ref{fig:seqcomp} shows the diagram of the composition.  We state
and prove the parallel compositionality result formally in
Section~\ref{sec:parcomp}, and the sequential compositionality result
in Section~\ref{sec:seqcomp}.

We remark that, in the sequential composition, if a (non-split) secret
input, say $\svar$, is shared by some $P_{1i}$ and $P_{1j}$ for $i
\neq j$, then the disjoint randomness condition requires $\svar$ to be
encoded by two independent input encoders.  This is in contrast to the
previous
works~\cite{DBLP:conf/fse/CoronPRR13,DBLP:journals/iacr/BartheBDFG15}
that only use one input encoder per a secret input.  On the other
hand, such works require additional randoms at the site of the
composition, whereas no additional randoms are needed at the
composition site in our case as it directly connects the split outputs
of $P_{1i}$'s to the split inputs of $P_2$.

\begin{figure}
\begin{center}
\includegraphics[width=10cm]{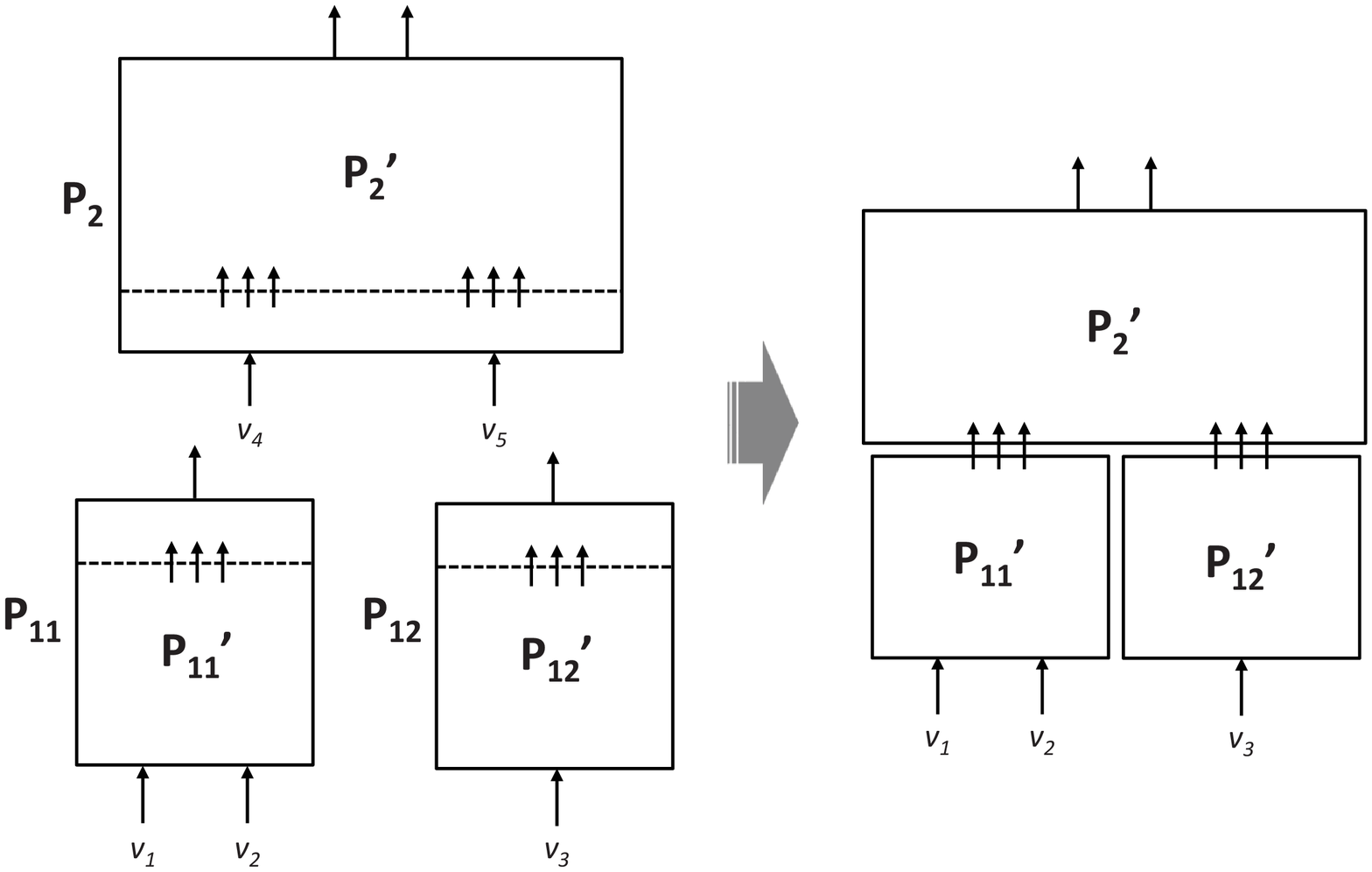}
\caption{Sequential composition of $P_{11}$, $P_{12}$, and $P_2$.
  Here, ${P_{11}}'$ (resp.~${P_{12}}'$) is the output-decoder-free part of
  $P_{11}$ (resp.~$P_{12}$), and ${P_2}'$ is the input-encoder-free part
  of $P_2$.  The composition connects the split outputs of ${P_{11}}'$
  and ${P_{12}}'$ to the split inputs of ${P_2}'$.}
 \label{fig:seqcomp}
\end{center}
\end{figure}

\subsection{Parallel Composition}
\label{sec:parcomp}

This subsection proves the parallel compositionality
result.  Let us write $\parcomp{P_1}{P_2}$ for the parallel
composition of $P_1$ and $P_2$.  We state and prove the result.
\begin{theorem}
\label{thm:parcomp}
Let $P_1$ and $P_2$ be $n$-leakage-resilient circuits having disjoint
randoms.  Then, $\parcomp{P_1}{P_2}$ is also $n$-leakage-resilient.
\end{theorem}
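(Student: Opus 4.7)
The plan is to fix arbitrary $\ve{b_\pvar}$, $\ve{b_\svar}$, $\ve{b_\svar}'$, and a tuple $\ve{\nd} \in \nodes{\parcomp{P_1}{P_2}}^n$, and to show that $\dist{\parcomp{P_1}{P_2}}{\ve{b_\pvar}}{\ve{b_\svar}}{\ve{\nd}} = \dist{\parcomp{P_1}{P_2}}{\ve{b_\pvar}}{\ve{b_\svar}'}{\ve{\nd}}$. Every observable node of the composition belongs to exactly one of $P_1$ or $P_2$, so (up to a harmless reordering) $\ve{\nd}$ partitions uniquely into subtuples $\ve{\nd_1}$ and $\ve{\nd_2}$ of lengths $n_1$ and $n_2$ with $n_1 + n_2 = n$, and any target vector $\ve{b_\nd}$ splits correspondingly into $\ve{b_{\nd_1}}$ and $\ve{b_{\nd_2}}$. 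By the disjointness hypothesis, the randoms $\ve{\rvar}$ of $\parcomp{P_1}{P_2}$ decompose as $\ve{\rvar_1} \uplus \ve{\rvar_2}$, so every random assignment factors uniquely as a pair $(\ve{b_1},\ve{b_2})$.

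The core step is to establish the factorization
\[
  \cnt{\parcomp{P_1}{P_2}}{\ve{b_\pvar}}{\ve{b_\svar}}{\ve{\nd}}{\ve{b_\nd}}
  =
  \cnt{P_1}{\ve{b_\pvar}}{\ve{b_\svar}}{\ve{\nd_1}}{\ve{b_{\nd_1}}}
  \mult
  \cnt{P_2}{\ve{b_\pvar}}{\ve{b_\svar}}{\ve{\nd_2}}{\ve{b_{\nd_2}}}
\]
(where each $P_i$ silently ignores the inputs on which it does not depend). This follows from the structure of parallel composition: the value assigned to any node in $P_i$ depends only on $P_i$'s inputs and on the randoms $\ve{\rvar_i}$, so the event ``$\ve{\nd}$ takes value $\ve{b_\nd}$'' decomposes into two events that are determined by the disjoint random coordinates $\ve{b_1}$ and $\ve{b_2}$; the count over $\aset{0,1}^{|\ve{\rvar}|}$ then splits into a product of counts.

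Next I would prove a marginalization lemma: if $P$ is $n$-leakage-resilient, then for every $k \le n$ and every $k$-tuple $\ve{\nd'} \in \nodes{P}^k$, the count $\cnt{P}{\ve{b_\pvar}}{\ve{b_\svar}}{\ve{\nd'}}{\cdot}$ is also independent of $\ve{b_\svar}$. The argument pads $\ve{\nd'}$ with $n - k$ copies of a single node $\ndb$ to obtain a length-$n$ tuple, and writes the count at $\ve{\nd'}$ as a sum over the (at most two) extensions of $\ve{b_{\nd'}}$ that are consistent with $\ndb$ taking a single fixed value; each summand is secret-independent by $n$-leakage-resilience of $P$, so the sum is too. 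Applying this lemma to both factors on the right-hand side of the displayed equation yields invariance of the product under $\ve{b_\svar} \mapsto \ve{b_\svar}'$, which gives the desired equality of distributions.

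The main obstacle is the factorization step, which relies on the observation that shared public or secret inputs cannot by themselves couple the two components: the only source of probabilistic coupling would be shared random bits, and the disjointness hypothesis is precisely what rules this out. Once factorization is in hand, the rest is the routine fact that marginals (and products of marginals) of secret-independent distributions remain secret-independent, so $n$-leakage-resilience of each $P_i$ transfers directly to $\parcomp{P_1}{P_2}$.
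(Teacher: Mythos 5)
Your proof is correct and follows essentially the same route as the paper: partition the $n$ observed nodes into those lying in $P_1$ and those lying in $P_2$, use disjointness of randoms to factor the count $\cnt{\parcomp{P_1}{P_2}}{\ve{b_\pvar}}{\ve{b_\svar}}{\ve{\nd}}{\ve{b_\nd}}$ into a contribution from each component, and then invoke $n$-leakage-resilience of each $P_i$. Two small points of comparison are worth noting. First, the product identity you state,
\[
\cnt{\parcomp{P_1}{P_2}}{\ve{b_\pvar}}{\ve{b_\svar}}{\ve{\nd}}{\ve{b_\nd}}
  =
  \cnt{P_1}{\ve{b_\pvar}}{\ve{b_\svar}}{\ve{\nd_1}}{\ve{b_{\nd_1}}}
  \mult
  \cnt{P_2}{\ve{b_\pvar}}{\ve{b_\svar}}{\ve{\nd_2}}{\ve{b_{\nd_2}}},
\]
is the right one; the paper's displayed chain uses $2^{|\ve{r_2}|}\mult \cnt{P_1}{\cdot} + 2^{|\ve{r_1}|}\mult \cnt{P_2}{\cdot}$, which as printed is a sum and does not hold (e.g.\ when both partial counts are $1$ and each $|\ve{\rvar_i}|=1$ the composed count is $1$, not $4$), so it appears to be a typo for the product you wrote. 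Second, you spell out the marginalization lemma (that $n$-leakage-resilience of $P_i$ gives secret-independence of the joint distribution on any $k\le n$ of its nodes), which the paper leaves implicit; this step is genuinely needed because $|\ve{\nd_1}|$ and $|\ve{\nd_2}|$ are typically strictly less than $n$. Your padding argument is one clean way to discharge it, though it also follows directly from the definition because $\ve{\nd}\in\nodes{P}^n$ allows repeated entries.
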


\begin{remark}
  While Theorem~\ref{thm:parcomp} only states that
  $\parcomp{P_1}{P_2}$ can withstand an attack that observes up to $n$
  nodes total from the composed circuit, a stronger property can
  actually be derived from the proof of the theorem.  That is, the
  proof shows that $\parcomp{P_1}{P_2}$ can withstand an attack that
  observes up to $n$ nodes from the $P_1$ part and up to $n$ nodes
  from the $P_2$ part.  (However, it is not secure against an attack
  that picks more than $n$ nodes in an arbitrary way: for example,
  picking $n+1$ nodes from one side.)
\end{remark}

\subsection{Sequential Composition}

\label{sec:seqcomp}

This subsection proves the sequential compositionality result.  As
remarked above, the result is significantly harder to prove than the
parallel compositionality result.  Let us write
$\compcirc{P_{11},\dots,P_{1y}}{P_2}$ for the sequential composition
of $P_{11},\dots,P_{1y}$ with a $y$-input circuit $P_2$.  We state and
prove the sequential compositionality result.
\begin{theorem}
\label{thm:seqmultinput}
Let $P_{11},\dots,P_{1y}$ be $n$-leakage-resilient circuits, and $P_2$
be an $y$-input $n$-leakage-resilient circuit, having disjoint
randoms.  Then, $\compcirc{P_{11},\dots,P_{1y}}{P_2}$ is
$n$-leakage-resilient.
\end{theorem}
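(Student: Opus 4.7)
My plan is to prove Theorem~\ref{thm:seqmultinput} in three stages. First I would establish a \emph{uniformity lemma}: for any $n$-leakage-resilient circuit $P$ with $(n+1)$-split output and intended output value $v$ (a function of secret and public inputs), the joint distribution of the $n+1$ output shares, taken over $P$'s randoms, is the uniform distribution on $(n+1)$-tuples XOR-summing to $v$. The argument: $n$-leakage-resilience applied to each $n$-subset of the shares forces its distribution to be secret-independent; comparing the $n$-subsets $\{1,\dots,n\}$ and $\{1,\dots,n-1,n+1\}$ via the XOR-to-$v$ constraint shows that the $n$-th coordinate is uniform conditional on the first $n-1$; a symmetry-and-Bayes argument then yields full uniformity on $\{0,1\}^n$. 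In particular, the bridge shares flowing from $P_{1i}'$ into $P_2'$ have exactly the distribution a fresh input encoder would produce on input $v_i$.

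Second, let $Q = \compcirc{P_{11},\dots,P_{1y}}{P_2}$, let $\ve{v}_i$ denote the bridge shares (the output shares of $P_{1i}'$, equal to the input shares of $P_2'$), and let $\ve{r}_{1i}, \ve{r}_2$ denote the randoms of $P_{1i}$ and of the stripped $P_2$, respectively. Fix an $n$-probe $\ve{\alpha}$ and partition it as $\ve{\alpha} = \bigcup_i \ve{\alpha}_{1i} \cup \ve{\alpha}_2$, with $|\ve{\alpha}_{1i}| = k_i$, $|\ve{\alpha}_2| = k_2$, and $\sum_i k_i + k_2 \le n$; probes on bridge wires are counted in $\ve{\alpha}_{1i}$. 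Since the $\ve{r}_{1i}$'s and $\ve{r}_2$ are disjoint and independent, the joint probability decomposes as
\[
\Pr_Q[\ve{\alpha}{=}\ve{b}] \;=\; \sum_{\ve{w}_1,\dots,\ve{w}_y} \Bigl(\prod_i \Pr_{\ve{r}_{1i}}[\ve{\alpha}_{1i}{=}\ve{b}_{\alpha_{1i}},\ \ve{v}_i{=}\ve{w}_i]\Bigr) \,\Pr_{\ve{r}_2}[\ve{\alpha}_2{=}\ve{b}_{\alpha_2} \mid \ve{v}_1{=}\ve{w}_1,\dots,\ve{v}_y{=}\ve{w}_y].
\]

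Third, I would show the right-hand side is secret-independent by a coordinated use of $n$-leakage-resilience on both sides. For each $i$, $n$-leakage-resilience of $P_{1i}$ applied to $\ve{\alpha}_{1i}$ together with any $n - k_i$ of the shares of $\ve{v}_i$ makes their joint distribution secret-independent; by the uniformity lemma, the remaining $k_i + 1$ shares of $\ve{v}_i$ are, conditional on the covered ones, uniform modulo the XOR-sum constraint, which pins one share as a function of $v_i$. For $P_2$, $n$-leakage-resilience applied to $\ve{\alpha}_2$ together with up to $n - k_2$ probes on $P_2$'s input shares (which are exactly the bridge shares) makes their joint distribution independent of the ``secret'' inputs $v_1,\dots,v_y$ of $P_2$. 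Since $n - k_2 \ge \sum_i k_i$ by the budget, $P_2$'s $(n - k_2)$ extension slots exactly suffice to cover the $\sum_i k_i$ ``uncovered free'' bridge shares across all $i$ (one share per $i$ being already pinned by XOR). Combining the two applications reduces dependence on the secret to dependence on the $v_i$'s alone, which $P_2$'s $n$-leakage-resilience then eliminates.

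The main obstacle is the third step. Unlike parallel composition, sequential composition forces the two sides to share a single probe budget, because each bridge share lies in both $\nodes{P_{1i}}$ and $\nodes{P_2}$: one observation on a bridge wire is simultaneously an observation on both sides. The budget $\sum_i k_i + k_2 \le n$ is just tight enough for the two-sided extensions to jointly cover every bridge share once, modulo the XOR-pinned share per $i$, and the disjoint-randomness hypothesis is essential for the factorization above to hold. Tightening the budget even slightly---as in the per-component-$n$-total-$(n+1)$ relaxation of Example~\ref{ex:seqtight}---breaks this allocation, consistent with the failure of compositionality there.
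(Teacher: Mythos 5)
Your plan takes a genuinely different route from the paper. The paper reformulates $n$-leakage-resilience as a combinatorial \emph{safety} condition on truth tables, shows that circuit composition corresponds to a ``table join,'' and proves the composition theorem by an induction that splits the adversary's $n$-probe budget between the two table operands and argues column-by-column (Lemma~\ref{lem:main} does the heavy lifting, by case analysis on whether the ``fixed columns'' of the two tables agree). Your plan instead works directly with probability distributions, which is legitimate and potentially cleaner: your step~2 factorization over the bridge shares is exactly right and is the probabilistic analogue of the paper's Lemma~\ref{lem:decomp}, and your observation that the probe budget $\sum_i k_i + k_2 \le n$ is just tight enough is the right intuition (it mirrors the split $m_1+m_2=n$ in the paper's Theorem~\ref{thm:main}).

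However, there are two concrete gaps. First, the \emph{uniformity lemma} of step~1 is false as stated: if the output $v$ does not depend on the secret, nothing forces the output shares to be uniform (consider a circuit whose shares are all the constant $0$; it is $n$-leakage-resilient but the output distribution is a point mass, not uniform on tuples XOR-summing to $0$). Your Bayes-style argument implicitly uses that $v$ takes both values as the secret varies. The paper's proof of Theorem~\ref{thm:main} deals with this by first testing whether all rows of $t_1^{\mathit{out}}$ sum to the same value and dispatching that case separately via Lemma~\ref{lem:outsame}. Second, and more seriously, step~3 is not a proof: the claim that, \emph{conditionally} on $\ve{\alpha}_{1i}$ and $n-k_i$ covered shares, the remaining $k_i+1$ shares ``are uniform modulo the XOR-sum constraint'' does not follow from the (unconditional, marginal) uniformity lemma, and $n$-leakage-resilience of $P_{1i}$ gives you control only over joint marginals of size at most $n$, not over the full $(n{+}1{+}k_i)$-dimensional joint you are implicitly conditioning on. Even granting conditional uniformity, you then write a product $\prod_i c_i(\cdot)\cdot\mathbf{1}[\Sigma=v_i]$ where $c_i$ still depends on the covered shares, and it is not spelled out why $P_2$'s resilience applied to the $\sum_i k_i$ free shares cancels that dependence---$P_2$'s resilience only constrains restrictions of $f$ of size $\le n-k_2$ on the bridge coordinates, but the $c_i$'s live on the complementary (covered) coordinates. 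What is actually needed here is an orthogonality argument: your condition (A) from $P_{1i}$'s resilience kills the low-frequency Fourier mass of $g^k-g^{k'}$ (degree $\le n-k_1$), and $P_2$'s resilience kills the high-frequency mass of $f$ (degree $> k_2$), and $k_1+k_2\le n$ makes the supports disjoint. This orthogonality, not conditional uniformity, is what closes the argument; the paper's table-safety induction is one way to package that combinatorics. As written, step~3 gestures at the right budget but does not supply the mechanism, so the proof is incomplete.
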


\begin{figure}[t]
\begin{center}
\includegraphics[width=9cm]{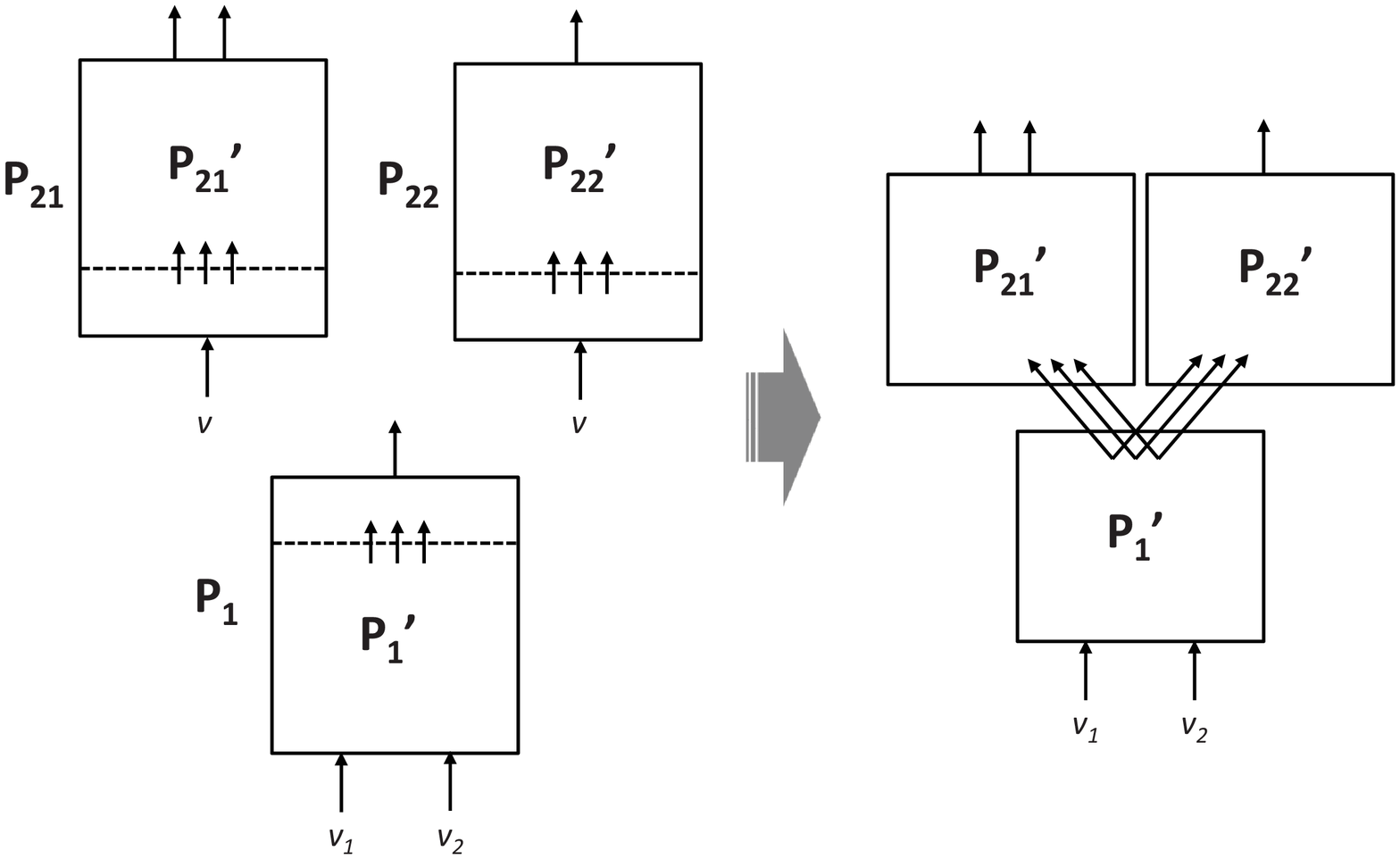}
\caption{Output-sharing sequential composition of $P_1$, $P_{21}$, and
  $P_{22}$.  Here, ${P_1}'$ is the output-decoder-free part of $P_1$,
  and ${P_{21}}'$ (resp.~${P_{22}}'$) is the input-encoder-free part
  of $P_{21}$ (resp.~$P_{22}$).  The composition connects the split
  output of ${P_1}'$ to the split inputs of ${P_{21}}'$ and ${P_{22}}'$.}
 \label{fig:outshare}
\end{center}
\end{figure}

\begin{example}
\label{ex:seqtight}
As remarked in Section~\ref{sec:comp}, the parallel
compositionality result enjoys an additional property that the
circuit is secure even under an attack that observes more than $n$
nodes in the composition as long as the observation in each component
is at most $n$.  We show that the property does not hold in the case
of sequential composition.  Indeed, it can be shown that just allowing
$n+1$ observations breaks the security even if the number of
observations made within each component is at most $n$.  

{\makeatletter
\let\par\@@par
\par\parshape0
\everypar{}\begin{wrapfigure}{r}{14em}
\vspace{-2em}
\begin{center}
\includegraphics[width=12em]{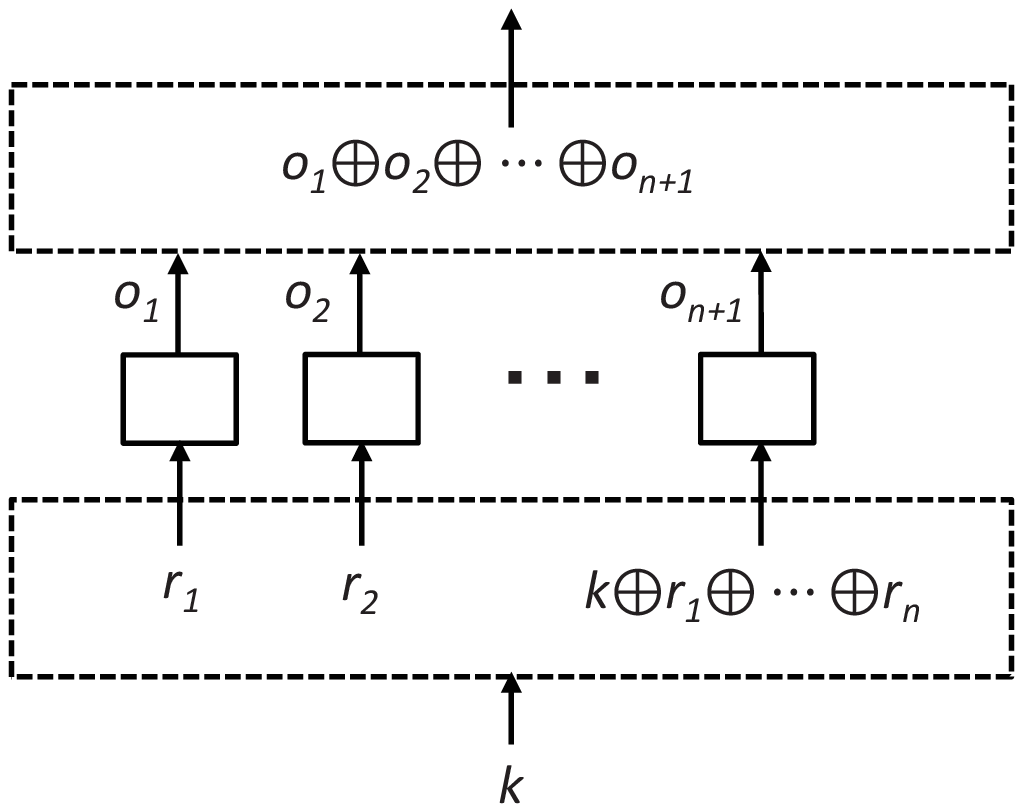}
\caption{An $n$-leakage resilient identity circuit.}
 \label{fig:id}
\vspace{-2em}
\end{center}
  \end{wrapfigure}
To see this, consider the \plusone{n}-split circuit shown in
Fig.~\ref{fig:id}.  The circuit implements the identity function,
and it is easy to see that the circuit is $n$-leakage resilient.  Let
$P_1$ and $P_2$ be copies of the circuit, and consider the composition
$\compcirc{P_1}{P_2}$.  Then, the composed circuit is not secure
against an attack that observes $m$ nodes of $P_1$ for some $1 \leq m
\leq n$, and observes $n+1 - m$ nodes of $P_2$ such that the nodes
picked on the $P_2$ side are the nodes connected to the nodes that are
not picked on the $P_1$ side.\par}
\end{example}

\begin{remark}
\label{rem:outputshare}
By a reasoning similar to the one used in the proof of
Theorem~\ref{thm:seqmultinput}, we can show the correctness of a more
parsimonious version of parallel composition theorem
(Theorem~\ref{thm:parcomp}) where given $P_1$ and $P_2$ that shares a
secret, instead of $\parcomp{P_1}{P_2}$ duplicating split shares of
the secret as in Fig.~\ref{fig:parcomp}, we make one split share
tuple to be used in the both sides of the composition.  Combining this
improved parallel composition with the sequential compositionality
result, we obtain compositionality for the case where an output of a
circuit is shared by more than one circuit in a sequential
composition.

Fig.~\ref{fig:outshare} depicts such an {\em output-sharing
  sequential composition}.  Here, $P_1$, $P_{21}$, and $P_{22}$ are
$n$-leakage-resilient circuits, and we wish to compose them by
connecting the output of $P_1$ to the input of $P_{21}$ and the input
of $P_{22}$.  By the parallel compositionality result, the parallel
composition of $P_{21}$ and $P_{22}$ that shares the same input
($\var$) is $n$-leakage-resilient.  Then, it follows that,
sequentially composing that parallelly composed circuit with $P_1$, as
depicted in the figure, is also $n$-leakage-resilient thanks to the
sequential compositionality result.
\end{remark}

\section{Compositional Synthesis Algorithm}
\label{sec:alg}

\renewcommand{\algorithmicrequire}{\textbf{Input:}}
\renewcommand{\algorithmicensure}{\textbf{Output:}}

The compositionality property gives a way for a compositional approach
to synthesizing leakage-resilient circuits.  Algorithm~\ref{alg:main}
shows the overview of the synthesis algorithm.  Given a random-free
circuit as an input, the algorithm synthesizes an IO-equivalent
$n$-leakage-resilient circuit.  It first invokes the $\textsc{Decomp}$
operation to choose a suitable decomposition of the given circuit into
some number of sub-circuits.  Then, it invokes $\textsc{MonoSynth}$
on each sub-circuit $P_i$ to synthesize an $n$-leakage resilient
circuit ${P_i}'$ that is IO-equivalent to $P_i$.  Finally, it returns
the composition of the obtained circuits as the synthesized
$n$-leakage resilient version of the original.

\begin{algorithm}
\caption{The Compositional Synthesis Algorithm}         
\label{alg:main}                          
\begin{algorithmic}[1]                 
\Require Random-free Circuit $P$
\Ensure IO-equivalent $n$-leakage-resilient circuit 
\vspace{0.2em}
\State $P_1,\dots,P_m$ $\assign$ $\textsc{Decomp}(P)$
\For{\textbf{each} $P_i \in \{P_1,\dots,P_m\}$}
\State ${P_i}'$ $\assign$ $\textsc{MonoSynth}(P_i)$
\EndFor
\State \Return $\textsc{Comp}({P_1}',\dots,{P_m}')$
\end{algorithmic}
\end{algorithm}

$\textsc{Comp}$ is the composition operation, and it composes the
given $n$-leakage-resilient circuits in the manner described in
Section~\ref{sec:comp}. $\textsc{MonoSynth}$ is a constraint-based
``monolithic'' synthesis algorithm that synthesizes an
$n$-leakage-resilient circuit that is IO-equivalent to the given
circuit without further decomposition.  We describe
$\textsc{MonoSynth}$ in Section~\ref{sec:mono}, and describe the
decomposition operation $\textsc{Decomp}$ in Section~\ref{sec:decomp}.

The algorithm optimizes the synthesized circuits in the following
ways.  First, as described in Section~\ref{sec:mono}, the monolithic
synthesis looks for tree-shaped circuits of the shortest tree height.
Secondly, as described in Section~\ref{sec:decomp}, the decomposition
and composition is done in a way to avoid unnecessarily making the
non-secret-dependent parts leakage resilient, and also to re-use the
synthesis results for shared sub-circuits whenever allowed by the
compositionality properties.

\begin{remark}
\label{rem:comp}
The compositional algorithm composes the $n$-leakage-resilient
versions of the sub-circuits.  Note that the compositionality property
states that the result will be $n$-leakage-resilient after the
composition, regardless of how the sub-circuits are synthesized as
long as they are also $n$-leakage-resilient and have disjoint randoms.
Thus, in principle, any method to synthesize the $n$-leakage-resilient
versions of the sub-circuits may be used in place of
$\textsc{MonoSynth}$.  For instance, a possible alternative is to use
a database of $n$-leakage-resilient versions of commonly-used circuits
(e.g., obtained via the construction
of~\cite{DBLP:conf/crypto/IshaiSW03,DBLP:conf/fse/CoronPRR13}).
\end{remark}

\subsection{Constraint-Based Monolithic Synthesis}
\label{sec:mono}

The monolithic synthesis algorithm is based on and extends the
constraint-based approach proposed by Eldib and
Wang~\cite{DBLP:conf/cav/EldibW14}.  The algorithm synthesizes an
$n$-leakage-resilient circuit that is IO-equivalent to the given
circuit.  The algorithm requires the given circuit to have only one
output.  Therefore, the overall algorithm to decomposes the whole
circuit into such sub-circuits before passing them to
$\textsc{MonoSynth}$.

We formalize the algorithm as quantified first-order logic constraint
solving.  Let $P$ be the random-free circuit
given as the input.  We prepare a quantifier-free formula
$\phip(\ve{\nd},\ve{\pvar},\ve{\svar},\ovar)$ on the free variables
$\ve{\nd}$, $\ve{\pvar}$, $\ve{\svar}$, $\ovar$ that encodes the
input-output behavior of $P$.  Formally, $\exists
\ve{\nd}.\phip(\ve{\nd},\ve{\pvar},\ve{\svar},\ovar)$ is true iff $P$
outputs $\ovar$ given public inputs $\ve{\pvar}$ and secret inputs
$\ve{\svar}$.  The variables $\ve{\nd}$ are used for encoding the
shared sub-circuits within $P$ (i.e., gates of fan-out $>1$).  For
example, for $P$ that outputs $\svar \wedge \pvar$,
$\phip(\pvar,\svar,\ovar) \equiv \ovar = \svar \wedge \pvar$.

\begin{figure}[t]
\begin{center}
\begin{tabular}[t]{cc}
\begin{minipage}[c]{12em}
\includegraphics[width=10em]{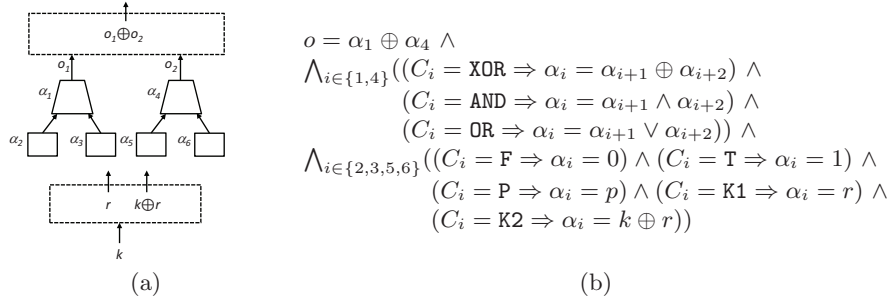}
\end{minipage}
&
\(
\begin{array}{l}
\ovar = \nd_1 \oplus \nd_4 \ \wedge \\
\bigwedge_{i \in \aset{1,4}} ((\cvar_i = \texttt{XOR} \Rightarrow \nd_i = \nd_{i+1} \oplus \nd_{i+2}) \ \wedge \\
\hspace{4em} (\cvar_i = \texttt{AND} \Rightarrow \nd_i = \nd_{i+1} \wedge \nd_{i+2}) \ \wedge \\
\hspace{4em}  (\cvar_i = \texttt{OR} \Rightarrow \nd_i = \nd_{i+1} \vee \nd_{i+2})) \ \wedge \\
\bigwedge_{i \in \aset{2,3,5,6}} ((\cvar_i = \texttt{F} \Rightarrow \nd_i = 0) \wedge
(\cvar_i = \texttt{T} \Rightarrow \nd_i = 1) \ \wedge\\
\hspace{5.2em}  (\cvar_i = \texttt{P} \Rightarrow \nd_i = \pvar) \wedge
(\cvar_i = \texttt{K1} \Rightarrow \nd_i = \rvar) \ \wedge\\
\hspace{5.2em} (\cvar_i = \texttt{K2} \Rightarrow \nd_i = \svar \oplus \rvar))
\end{array}
\)
\\
\textrm{(a)} & \textrm{(b)}
\end{tabular}
\caption{(a) Skeleton circuit.  (b) $\phiskel{2}$ for the skeleton circuit.}
 \label{fig:exskel}
\end{center}
\end{figure}

Adopting the approach of \cite{DBLP:conf/cav/EldibW14}, our monolithic
algorithm works by preparing {\em skeleton circuits} of increasing
size, and searching for an instance of the skeleton that is
$n$-leakage-resilient and IO-equivalent to $P$.  By starting from a
small skeleton, the algorithm biases toward finding an optimized
leakage resilient circuit.  Formally, a skeleton circuit $\skel{\ell}$
is a tree-shaped circuit (i.e., circuit with all non-input gates
having fan-out of $1$) of height $\ell$ whose gates have undetermined
functionality except for the parts that implement the input encoder
and the output decoder.  For example, Fig.~\ref{fig:exskel} (a) shows
the \plusone{1}-split skeleton circuit of height $2$ with one secret
input.

We prepare a quantifier-free {\em skeleton formula} $\phiskel{\ell}$
that expresses the skeleton circuit.  Formally,
$\phiskel{\ell}(\ve{\cvar},
\ve{\nd},\ve{\pvar},\ve{\svar},\ve{\rvar},\ovar)$ is true iff $P'$
outputs $\ovar$ with the valuations of $\nodes{P'}$ having the values
$\ve{\nd}$ given public inputs $\ve{\pvar}$, secret inputs
$\ve{\svar}$, and random inputs $\ve{\rvar}$ where $P'$ is
$\skel{\ell}$ with its nodes' functionality determined by
$\ve{\cvar}$.\footnote{Technically, the number of randoms (i.e.,
  $\ve{\rvar}$) can also be left undetermined in the skeleton.  Here,
  for simplicity, we assume that the number of randoms is determined
  by the factors such as $n$, the skeleton height, and the number of
  secret inputs.}  We call $\ve{\cvar}$ the {\em control variables},
and write $\skel{\ell}(\ve{\cvar})$ for the instance of $\skel{\ell}$
determined by $\ve{\cvar}$.  For example, Fig.~\ref{fig:exskel} (b)
shows $\phiskel{\ell}$ for the skeleton circuit from
Fig.~\ref{fig:exskel} (a), when there is one public input and no
randoms besides the one from the input encoder.

The synthesis is now reduced to the problem of finding an assignment
to $\ve{\cvar}$ that satisfies the constraint $\phiioeq(\ve{\cvar})
\wedge \philr(\ve{\cvar})$ where $\phiioeq(\ve{\cvar})$ expresses that
$\skel{\ell}(\ve{\cvar})$ is IO-equivalent to $P$, and
$\philr(\ve{\cvar})$ expresses that $\skel{\ell}(\ve{\cvar})$ is
$n$-leakage-resilient.  As we shall show next, the constraints
faithfully encode IO-equivalence and leakage-resilience according to
the definitions from Section~\ref{sec:prelim}.

$\phiioeq(\ve{\cvar})$ is the formula below.
\[
\begin{array}{l}
\forall \ve{\nd},\ve{\nd'},\ve{\pvar},\ve{\svar},\ve{\rvar},\ovar,\ovar'.
\ \phip(\ve{\nd},\ve{\pvar},\ve{\svar},\ovar) \wedge 
\phiskel{\ell}(\ve{\cvar},\ve{\nd'},\ve{\pvar},\ve{\svar},\ve{\rvar},\ovar')
\Rightarrow \ovar = \ovar'
\end{array}
\]
It is easy to see that $\phiioeq$ correctly expresses IO equivalence.

The definition of $\philr$ is more involved, and we begin by
introducing a useful shorthand notation.  Let $m$ be the number of
(observable) nodes in $\skel{\ell}$.  Without loss of generality, we
assume that $m \geq n$.\texcomment{Because otherwise, circuit cannot
  be made secure.  Attacker can just see everything out of the input
  encoder or all inputs to the output decoder.  Also, constraints to
  handle that case is easy because there is only one possible
  selection for the attacker - all nodes.}  Let $\ve{\sigma} \in
\aset{1,\dots,m}^*$ be a sequence such that $|\ve{\sigma}| \leq n$,
and $\ve{\nd} = \nd_1,\dots,\nd_m$ be a length $m$ sequence of
variables.  We write $\nodesel{\ve{\nd}}{\ve{\sigma}}$ for the
sequence of variables $\ndb_1,\dots \ndb_{|\ve{\sigma}|}$ such that
$\ndb_i = \nd_{\sigma_i}$ for each $i \in
\aset{1,\dots,|\ve{\sigma}|}$.  Intuitively, $\ve{\sigma}$ represents
a selection of nodes observed by the adversary.  For example, let
$\ve{\alpha}=\alpha_1,\alpha_2,\alpha_3$ and $\ve{\sigma} = 1,3$, then
$\nodesel{\ve{\nd}}{\ve{\sigma}} = \nd_1,\nd_3$.

Let $\rspace = \aset{0,1}^{|\ve{\rvar}|}$.  Then, $\philr(\ve{\cvar})$
is the formula below.
\[
\begin{array}{l}
\forall \ve{\sigma}.
\forall \ve{\ndb},\ve{\nd}',\ve{\pvar},\ve{\svar},\ve{\svar}',\ve{\ovar}.\\
\hspace{1.5em} \bigwedge_{\ve{b} \in \rspace} \phi[\ve{\nd_b}/\ve{\nd}][\ovar_{\ve{b}}/\ovar] \wedge 
\phi[\ve{\nd'_b}/\ve{\nd}][\ovar'_{\ve{b}}/\ovar][\ve{\svar'}/\ve{\svar}]\\
 \hspace{3em} \Rightarrow \ \sum_{\ve{b} \in \rspace} (\nodesel{\ve{\nd_b}}{\ve{\sigma}} = \ve{\ndb}) = 
\sum_{\ve{b} \in \rspace} (\nodesel{\ve{\nd_b}'}{\ve{\sigma}} = \ve{\ndb})
\end{array}
\]
where $\ve{\nd}'$ is a sequence comprising distinct variables
$\ve{\nd_b}$ and $\ve{\nd'_b}$ such that $|\ve{\nd_b}| = |\ve{\nd'_b}|
= m$ for each $\ve{b} \in \rspace$, $\ve{\ovar}$ is a sequence
comprising distinct variables $\ovar_{\ve{b}}$ and $\ovar'_{\ve{b}}$
for each $\ve{b} \in \rspace$, and $\phi$ is the formula
$\phiskel{\ell}(\ve{\cvar},\ve{\nd},\ve{\pvar},\ve{\svar},\ve{b},\ovar)$.
While $\philr(\ve{\cvar})$ is not strictly a first-order logic
formula, it can be converted to the form by expanding the finitely many
possible choices of $\ve{\sigma}$.

\begin{algorithm}
\caption{\textsc{MonoSynth}}         
\label{alg:mono}                          
\begin{algorithmic}[1]                 
\Require Random-free Circuit $P$
\Ensure IO-equivalent $n$-leakage resilient circuit 
\vspace{0.2em}
\State $\ell$ $\assign$ $\texttt{INIT\_HEIGHT}$
\Loop
\State $\testset$ $\assign$ $\emptyset$; $\nodecount$ $\assign$ $n$
\Loop
\State \textbf{match} \textsc{FindCand}($\testset$,$\gamma$) \textbf{with}
\State \hspace{1em} $\nosol$ $\rightarrow$ $\ell$\texttt{++}; \textbf{break}
\State \hspace{0.5em} $\mid$ $\sol(\ve{b_\cvar})$ $\rightarrow$ 
\textbf{match} \textsc{CheckCand}($\ve{b_\cvar}$) \textbf{with}
\State \hspace{7.5em} $\success$ $\rightarrow$ \Return $\skel{\ell}(\ve{b_\cvar})$
\State \hspace{7em} $\mid$ $\cex(\testset',\nodecount')$ $\rightarrow$ $\testset$ $\assign$ $\testset \cup \testset'$; $\nodecount$ $\assign$ $\textit{max}(\nodecount,\nodecount')$
\EndLoop
\EndLoop
\end{algorithmic}
\end{algorithm}

Because the domains of the quantified variables in $\phiioeq$ and
$\philr$ are finite, one approach to solving the constraint may be first
eliminating the quantifiers eagerly and then solving for $\ve{\cvar}$
that satisfies the resulting constraint.  However, the approach is
clearly impractical due to the extreme size of the resulting
constraint.  Instead, adopting the idea from
\cite{DBLP:conf/cav/EldibW14}, we solve the constraint by lazily
instantiating the quantifiers.  The main idea is to divide the
constraint solving process in two phases: the {\em candidate finding}
phase that infers a candidate solution for $\ve{\cvar}$, and the {\em
  candidate checking} phase that checks whether the candidate is
actually a solution.  We run the phases iteratively in a CEGAR style
until convergence.  Algorithm~\ref{alg:mono} shows the overview of the
process.  We describe the details of the algorithm below.

\vspace{0.5em}
\noindent{\bf Candidate Checking.}
The candidate checking phase is straightforward.  Note that, after
expanding the choices of $\ve{\sigma}$ in $\philr$,
$\phiioeq(\ve{\cvar}) \wedge \philr(\ve{\cvar})$ only has outer-most
$\forall$ quantifiers.  Therefore, given a concrete assignment to
$\ve{\cvar}$, $\ve{b_\cvar}$, $\textsc{CheckCand}$ directly solves the
the constraint by using an SMT solver.\footnote{Also, because
  $\ve{\cvar}$ are the only shared variables in $\phiioeq$ and
  $\philr$, the two may be checked independently after instantiating
  $\ve{\cvar}$.}  (However, naively expanding $\ve{\sigma}$ can be
costly when $n > 1$, and we show a modification that alleviates the
cost in the later part of the subsection.)

\vspace{0.5em}
\noindent{\bf Candidate Finding.}
We describe the candidate finding process $\textsc{FindCand}$.  To
find a likely candidate, we adopt the idea from
\cite{DBLP:conf/cav/EldibW14} and prepare a {\em test set} that is
updated via the CEGAR iterations.  In \cite{DBLP:conf/cav/EldibW14}, a
test set, $\testset$, is a pair of sets $\testset_\pvar$ and
$\testset_\svar$ where $\testset_\pvar$ (resp.~$\testset_\svar$)
contains finitely many concrete valuations of $\ve{\pvar}$
(resp.~$\ve{\svar}$).  Having such a test set, we can rewrite the
constraint so that the public inputs and secret inputs are restricted
to those from the test set.  That is, $\phiioeq(\ve{\cvar})$ is
rewritten to be the formula below.
\[
\begin{array}{l}
\bigwedge_{(\ve{b_\pvar},\ve{b_\svar}) \in \testset_{\pvar}\times\testset_{\svar}}\\
\hspace{0.5em} \phip(\ve{\nd_{\tst{}}},\ve{b_\pvar},\ve{b_\svar},\ovar_{\tst{}})  \wedge \phiskel{\ell}(\ve{\cvar},\ve{\nd'_{\tst{}}},\ve{b_\pvar},\ve{b_\svar},\ve{\rvar_{\tst{}}},\ovar'_{\tst{}}) 
\wedge \ovar_{\tst{}} = \ovar'_{\tst{}}
\end{array}
\]
And, $\philr(\ve{\cvar})$ becomes the formula below.
\[
\begin{array}{l}
\forall \ve{\sigma}. \forall \ve{\ndb}.\bigwedge_{(\ve{b_\pvar},\ve{b_\svar},\ve{b_\svar}') \in \testset_\pvar\times\testset_\svar\times\testset_\svar}\bigwedge_{\ve{b} \in \rspace} \\
\hspace{1em} \phi[\ve{\nd}_{\tst{}\ve{b}}/\ve{\nd}][\ovar_{\tst{}\ve{b}}/\ovar][\ve{b_\svar}/\ve{\svar}] \wedge
\phi[\ve{\nd}'_{\tstprime{}\ve{b}}/\ve{\nd}][\ovar'_{\tstprime{}\ve{b}}/\ovar][\ve{b_\svar'}/\ve{\svar}] \\
\hspace{1em} \wedge \ \sum_{\ve{b} \in \rspace} (\nodesel{\ve{\nd_{\tst{}\ve{b}}}}{\ve{\sigma}} = \ve{\ndb}) = 
\sum_{\ve{b} \in \rspace} (\nodesel{\ve{\nd'_{\tstprime{}\ve{b}}}}{\ve{\sigma}} = \ve{\ndb})
\end{array}
\]
where $\phi$ is the formula
$\phiskel{\ell}(\ve{\cvar},\ve{\nd},\ve{b_\pvar},\ve{\svar},\ve{b},\ovar)$.
We remark that, because fixing the inputs to concrete values also
fixes the valuations of some other variables (e.g., fixing
$\ve{\pvar}$ and $\ve{\svar}$ also fixes $\ve{\nd}$ and $\ovar$ in
$\phip(\ve{\nd},\ve{\pvar},\ve{\svar},\ovar)$), the constraint
structure is modified to remove the quantifications on such variables.

At this point, the approach of \cite{DBLP:conf/cav/EldibW14} can be
formalized as the following process: it eagerly instantiates the
possible choices of $\ve{\sigma}$ and $\ve{\ndb}$ to reduce the
constraint to a quantifier-free form, and looks for a satisfying
assignment to the resulting constraint.  This is a sensible approach
when $n$ is $1$ because, in that case, the number of possible choices
of $\ve{\sigma}$ is linear in the size of the skeleton (i.e., is $m$)
and the possible valuations of $\ve{\ndb}$ are simply $\aset{0,1}$.
Unfortunately, the number of possible choices of $\ve{\sigma}$ grows
exponentially in $n$, and so does that of the possible valuations of
$\ve{\ndb}$.\footnote{Therefore, the complexity of the method is at
  least double exponential in $n$ assuming that the complexity of the
  constraint solving process is at least exponential in the size of
  the given formula.}  We remark that this is expected because
$\ve{\sigma}$ represents the adversary's node selection choice, and
$\ve{\ndb}$ represents the valuation of the selected nodes.  Indeed,
in our experience with a prototype implementation, this method
fails even on quite small sub-circuits when $n$ is even $2$.

Therefore, we make the following improvements to the base algorithm.
\begin{itemize}
\item[(1)] We restrict the node selection to root-most $\nodecount$
  nodes where $\nodecount$ starts at $n$ and is incremented
  via the CEGAR loop.
\item[(2)] We include node valuations in the test set.
\item[(3)] We use dependency analysis to reduce irrelevant
node selections from the constraint in the candidate checking phase.
\end{itemize}
The rationale for prioritizing the root-most nodes in (1) is that, in
a tree-shaped circuit, nodes closer to the root are more likely to be
dependent on the secret and therefore are expected to be better
targets for the adversary.  The number of root-most nodes to select,
$\nodecount$, is incremented as needed by a counterexample analysis
(cf.~lines 7-9 of Algorithm~\ref{alg:mono}).  The test set generation
for node valuations described in item (2) is done in much the same way
as that for public inputs and secret inputs.  We describe the test set
generation process in more detail in {\bf Test Set Generation}.  With
the modifications (1) and (2), the leakage-resilience constraint to be
solved in the candidate finding phase is now the following formula.
\[
\begin{array}{l}
\forall \ve{\sigma}\smallcolon\nodecount. \bigwedge_{(\ve{b_\pvar},\ve{b_\svar},\ve{b_\svar}',\ve{b_\ndb}) \in \testset_\pvar\times\testset_\svar\times\testset_\svar\times\testset_\ndb} \bigwedge_{\ve{b} \in \rspace} \\
\hspace{1em} \phi[\ve{\nd}_{\tst{}\ve{b}}/\ve{\nd}][\ovar_{\tst{}\ve{b}}/\ovar][\ve{b_\svar}/\ve{\svar}] \wedge
\phi[\ve{\nd}'_{\tstprime{}\ve{b}}/\ve{\nd}][\ovar'_{\tstprime{}\ve{b}}/\ovar][\ve{b_\svar'}/\ve{\svar}] \\
\hspace{1em} \wedge \ \sum_{\ve{b} \in \rspace} (\nodesel{\ve{\nd_{\tst{}\ve{b}}}}{\ve{\sigma}} = \ve{b_\ndb}) = 
\sum_{\ve{b} \in \rspace} (\nodesel{\ve{\nd'_{\tstprime{}\ve{b}}}}{\ve{\sigma}} = \ve{b_\ndb})
\end{array}
\]
where $\ve{\sigma}\smallcolon\nodecount$ restricts $\ve{\sigma}$
to the root most $\nodecount$ indexes, $\testset_\ndb$ is the
set of test set elements for node valuations, and $\phi$ is the
formula
$\phiskel{\ell}(\ve{\cvar},\ve{\nd},\ve{b_\pvar},\ve{\svar},\ve{b},\ovar)$.

Unlike (1) and (2), the modification (3) applies to the candidate
checking phase.  To see the benefit of this modification, note that,
even in the candidate checking phase, checking the leakage-resilience
condition $\philr(\ve{b_\cvar})$ can be quite expensive because it
involves expanding exponentially many possible choices of node
selections.  To mitigate the cost, we take advantage of the fact that
the candidate circuit is fixed in the candidate checking phase, and do
a simple dependency analysis on the candidate circuit to reduce
irrelevant node-selection choices.  We describe the modification in
more detail.  Let $P'$ be the candidate circuit.  For each node of
$P'$, we collect the reachable leafs from the node to obtain the
over-approximate set of inputs on which the node may depend.  For a
node $\nd$ of $P'$, let $\depins{\nd}$ be the obtained set of
dependent inputs for $\nd$. Then, any selection of nodes $\ve{\nd}$
such that $\bigcup_{\nd \in \aset{\ve{\nd}}} \depins{\nd}$ does not
contain all \plusone{n}-split shares of some secret is an irrelevant
selection and can be removed from the constraint.  (Here, we use the
symbols $\nd$ for node labels as in Section~\ref{sec:prelim}, and not
as node-valuation variables in a constraint.)

\vspace{0.5em}
\noindent{\bf Test Set Generation.}
Recall that our algorithm maintains three kinds of test sets,
$\testset_\pvar$ for public inputs, $\testset_\svar$ for secret
inputs, and $\testset_\ndb$ for node valuations.  As shown in lines
7-9 of Algorithm~\ref{alg:mono}, we obtain new test set elements from
candidate check failures (here, by abuse of notation, we write
$\testset \cup \testset'$ for the component-wise union).  We describe
the process in more detail.  In $\textsc{CheckCand}$, we convert the
constraint $\phiioeq(\ve{b_\cvar}) \wedge \philr(\ve{b_\cvar})$ to a
quantifier free formula $\Phi$ by expanding the selection choices and
removing the universal quantifiers.  Then, we use an SMT solver to
check the satisfiability of $\neg \Phi$ and return $\success$ if it is
unsatisfiable.  Otherwise, the SMT solver returns a satisfying
assignment of $\neg \Phi$, and we return the values assigned to
variables corresponding to public inputs, secret inputs and node
valuations as the new elements for the respective test sets.  The
number of root-most nodes to select is also raised here by taking the
maximum of the root-most nodes observed in the satisfying assignment,
$\nodecount'$, with the current $\nodecount$.

\subsection{Choosing Decomposition}
\label{sec:decomp}

This subsection describes the decomposition procedure
$\textsc{Decomp}$.  Thanks to the generality of the compositionality
results, in principle, we can decompose the given circuit into
arbitrarily small sub-circuits (i.e., down to individual gates).
However, choosing a too fine-grained decomposition may lead to a
sub-optimal result.\footnote{One may make the analogy to compiler
  optimization.  Such a decomposition strategy is
  analogous to optimizing each instruction individually.}

To this end, we have implemented the following decomposition strategy.
First, we run a dependency analysis, similar to the one used in the
constraint-based monolithic synthesis (cf.~Section \ref{sec:mono}).
The analysis result is used to identify the parts of the given circuit
that do not depend on any of the secrets.  We factor out such {\em
  public-only} sub-circuits from the rest so that they will not be
subject to the leakage-resilience transformation.

Next, we look for sub-circuits that are used at multiple locations
(i.e., whose roots have fan-out $>1$), and prioritize them to be
synthesized separately and composed at their use sites.  Besides the
saving in the synthesis effort, the approach can lead a smaller
synthesis result when the shared sub-circuit is used in contexts that
lead to different outputs (cf~Remark~\ref{rem:outputshare}).  Finally,
as a general strategy, we apply parallel composition at the root so
that we synthesize separately for each output given a multi-output
circuit.  And, we set a bound on the maximum size of the circuits that
will be synthesized monolithically, and decompose systematically based
on the bound.  As discussed in Section~\ref{sec:impl}, in the
prototype implementation, we use an ``adaptive'' version of the latter
decomposition process by adjusting the bound on-the-fly and also opts
for a pre-made circuit under certain conditions.
\begin{example}
  Let us apply the compositional synthesis algorithm to the
  circuit from Fig.~\ref{fig:exorig}, for the case $n=2$.  Note that
  the circuit has no non-trivial public-only sub-circuits or have
  non-inputs gates with fan-out greater than $1$.

  First, we apply the parallel compositionality result so that the
  circuit is decomposed to two parts: the left tree that computes
  $\pvar_1 \oplus \svar_1 \oplus \svar_2$ and the right tree that
  computes $\svar_2 \oplus \pvar_2$.  The right tree cannot be
  decomposed further, and we apply $\textsc{MonoSynth}$ to transform
  it to a leakage-resilient form.  A possible synthesis result of this
  is the right sub-circuit shown in Fig.~\ref{fig:ex} (i.e., the
  sub-circuit whose observable part outputs the split output
  $\ovar_4,\ovar_5,\ovar_6$).

  For the left tree, if the monolithic-synthesis size bound is set to
  synthesize circuits of height $2$, we apply $\textsc{MonoSynth}$
  directly to the tree.  Alternatively, with a lower bound set, we
  further decompose the left tree to a lower part that computes
  $\pvar_1 \oplus \svar_1$ and $\pvar_2$ (identity function) and an
  upper part that computes $\svar \oplus \pvar_2$ where the output of
  the lower part is to be connected to the ``place-holder'' input
  $\svar$.  Following the either strategy, we may obtain the the left
  sub-circuit of Fig.~\ref{fig:ex} as a possible result.  And, the
  final synthesis result after composing the left and right synthesis
  results is the whole circuit of Fig.~\ref{fig:ex}.
\end{example}

\section{Implementation and Experiments}

\label{sec:impl}

\begin{table*}[t]
\scriptsize
\begin{center}
\begin{tabular}{|l|c|c|c|c|c|c|c|c|c|l|c|c|c|c|c|c|c|c|c|}
\hline
\multirow{2}{*}{\legname} &
\multicolumn{3}{|c|}{$n=2$} & \multicolumn{3}{|c|}{$n=3$} & \multicolumn{3}{|c|}{$n=4$} & 
\multirow{2}{*}{\legname} & 
\multicolumn{3}{|c|}{$n=2$} & \multicolumn{3}{|c|}{$n=3$} & \multicolumn{3}{|c|}{$n=4$} 
\\
\cline{2-10}\cline{12-20}
& \legtime & \legsize & \legnumrands & \legtime & \legsize & \legnumrands & \legtime & \legsize & \legnumrands & 
& \legtime & \legsize & \legnumrands & \legtime & \legsize & \legnumrands & \legtime & \legsize & \legnumrands 
\\
\hline
P1 & 16.2s & 123 & 32 & 49.4s & 138 & 48 & 52.7s & 160 & 64 &
P10 & \multicolumn{3}{c|}{T/O} & \multicolumn{3}{c|}{M/O} & \multicolumn{3}{c|}{M/O} \\
\hline
P2 & 10.3s & 64 & 16 & 1m13s & 76 & 24 & 4m3s & 88 & 32 &
P11 & \multicolumn{3}{c|}{T/O} & \multicolumn{3}{c|}{T/O} & \multicolumn{3}{c|}{M/O} \\
\hline
P3 & \multicolumn{3}{c|}{M/O} & \multicolumn{3}{c|}{M/O} & \multicolumn{3}{c|}{M/O} &
P12 & \multicolumn{3}{c|}{T/O} & \multicolumn{3}{c|}{T/O} & \multicolumn{3}{c|}{T/O} \\
\hline
P4 & \multicolumn{3}{c|}{M/O} & \multicolumn{3}{c|}{M/O} & \multicolumn{3}{c|}{M/O} &
P13 & \multicolumn{3}{c|}{T/O} & \multicolumn{3}{c|}{T/O} & \multicolumn{3}{c|}{T/O} \\
\hline
P5 & \multicolumn{3}{c|}{M/O} & \multicolumn{3}{c|}{M/O} & \multicolumn{3}{c|}{M/O} &
P14 & \multicolumn{3}{c|}{T/O} & \multicolumn{3}{c|}{T/O} & \multicolumn{3}{c|}{M/O} \\
\hline
P6 & \multicolumn{3}{c|}{M/O} & \multicolumn{3}{c|}{M/O} & \multicolumn{3}{c|}{M/O} &
P15 & \multicolumn{3}{c|}{T/O} & \multicolumn{3}{c|}{T/O} & \multicolumn{3}{c|}{M/O} \\
\hline
P7 & \multicolumn{3}{c|}{M/O} & \multicolumn{3}{c|}{M/O} & \multicolumn{3}{c|}{M/O} &
P16 & \multicolumn{3}{c|}{T/O} & \multicolumn{3}{c|}{T/O} & \multicolumn{3}{c|}{T/O} \\
\hline
P8 & \multicolumn{3}{c|}{M/O} & \multicolumn{3}{c|}{M/O} & \multicolumn{3}{c|}{M/O} &
P17 & \multicolumn{3}{c|}{T/O} & \multicolumn{3}{c|}{T/O} & \multicolumn{3}{c|}{T/O} \\
\hline
P9 & \multicolumn{3}{c|}{T/O} & \multicolumn{3}{c|}{T/O} & \multicolumn{3}{c|}{T/O} &
P18 & \multicolumn{3}{c|}{T/O} & \multicolumn{3}{c|}{T/O} & \multicolumn{3}{c|}{T/O} \\
\hline
\end{tabular}

\end{center}
\caption{Experiment Results: Monolithic Only.}
\label{tbl:expmono}
\end{table*}

\begin{table*}[t]
\scriptsize
\begin{center}
\begin{tabular}{|l|c|c|c|c|c|c|c|c|c|c|c|c|}
\hline
\multirow{2}{*}{\legname} &
\multicolumn{4}{|c|}{$n=2$} & \multicolumn{4}{|c|}{$n=3$} & \multicolumn{4}{|c|}{$n=4$} \\
\cline{2-13}
& \legtime & \legmaxtpc & \legsize & \legnumrands & \legtime & \legmaxtpc & \legsize & \legnumrands & \legtime & \legmaxtpc & \legsize & \legnumrands
\\
\hline
P1 & 18.7s & 2.7s & 125 & 32 & 30.2s & 4.5s & 143 & 48 & 1m1s & 33.3s & 160 & 64 \\
\hline
P2 & 11.8s & 3.6s & 65 & 16 & 1m7s & 17.3s & 74 & 24 & 2m56s & 1m21s & 88 & 32 \\
\hline
P3 & 2.7s & 0.8s & 50 & 11 & 12.3s & 6.9s & 83 & 18 & 3m57s & 1m4s & 120 & 26 \\
\hline
P4 & 3.9s & 3.1s & 42 & 9 & 4.5s & 2.1s & 69 & 15 & 1m48s & 1m4s & 100 & 22 \\
\hline
P5 & 5.9s & 2.1s & 63 & 13 & 20.2s & 6.9s & 108 & 21 & 5m12s & 1m4s & 140 & 30 \\
\hline
P6 & 5.3s & 1.4s & 65 & 13 & 17.4s & 6.9s & 108 & 21 & 4m57s & 1m7s & 141 & 30 \\
\hline
P7 & 2m52s & 1m28s & 80 & 15 & 6m13s & 4m6s & 163 & 30 & 9m7s & 2m42s & 231 & 44 \\
\hline
P8 & 3m10s & 1m48s & 77 & 15 & 5m8s & 3m19s & 161 & 30 & 8m47s & 1m44s & 234 & 44 \\
\hline
P9 & 1.2s & 1.1s & 33 & 9 & 2m58s & 2m55s & 58 & 15 & 1m7s & 1m5s & 87 & 22 \\
\hline
P10 & 2m2s & 1m1s & 583 & 146 & 18m37s & 4m46s & 1027 & 249 & 26m51s & 5m2s & 1598 & 372 \\
\hline
P11 & 4m9s & 1m5s & 464 & 112 & 24m41s & 5m25s & 814 & 192 & 46m12s & 18m28s & 1269 & 288 \\
\hline
P12 & 10m33s & 1m5s & 1507 & 370 & 1h4m19s & 5m56s & 2643 & 633 & 2h14m24s & 27m1s & 4116 & 948 \\
\hline
P13 & 17m7s & 3.4s & 14369 & 1088 & 50m5s & 16.1s & 21163 & 1824 & 10h27m40s & 1m58s & 22123 & 2688 \\
\hline
P14 & 17m12s & 3.4s & 14369 & 1088 & 50m0s & 16.0s & 21163 & 1824 & 10h27m47s & 2m21s & 22153 & 2688 \\
\hline
P15 & 17m15s & 3.4s & 14625 & 1088 & 52m32s & 16.4s & 21763 & 1824 & 10h30m26s & 1m35s & 22927 & 2688 \\
\hline
P16 & 16m39s & 3.4s & 14553 & 1088 & 52m10s & 14.9s & 21773 & 1824 & 10h24m23s & 1m27s & 22922 & 2688 \\
\hline
P17 & 5h28m27s & 4m3s & 16066 & 1594 & 19h11m33s & 7m46s & 23568 & 2592 & 17h37m27s & 2m36s & 25236 & 3712 \\
\hline
P18 & 34m42s & 3.7s & 60111 & 3968 & 1h37m9s & 15.5s & 78418 & 6144 & 17h43m2s & 1m54s & 74847 & 8448 \\
\hline
\end{tabular}

\end{center}
\caption{Experiment Results: Compositional.}
\label{tbl:expcomp}
\end{table*}

We have implemented a prototype of the compositional synthesis
algorithm.  The implementation takes as input a finite-data loop-free
C program and converts the program into a Boolean circuit in the
standard way.  We remark that, in principle, a program with
non-input-dependent loops and recursive functions may be converted to
such a form by loop unrolling and function inlining.

The implementation is written in the OCaml programming language.  We
use CIL~\cite{DBLP:conf/cc/NeculaMRW02} for the front-end parsing and
Z3~\cite{DBLP:conf/tacas/MouraB08} for the SMT solver used in the
constraint-based monolithic synthesis.  The experiments are conducted
on a machine with a 2.60GHz Intel Xeon E5-2690v3 CPU with 8GB of RAM
running a 64-bit Linux OS, with the time limit of 20 hours.

We have run the implementation on the 18 benchmark programs taken from
the paper by Eldib and Wang~\cite{DBLP:conf/cav/EldibW14}.  The
benchmarks are (parts of) various cryptographic algorithm
implementations, such as a round of AES, and we refer to their paper
for the details of the respective benchmarks (we use the same program
names).\footnote{Strangely, some of the benchmarks contain inputs
  labeled as ``random'' and have random IO behavior (when those inputs
  are actually treated as randoms), despite the method of
  \cite{DBLP:conf/cav/EldibW14} only supporting programs with
  non-random (i.e., deterministic) IO behavior.  We treat such
  ``random'' inputs as public inputs in our experiments.}  Whereas
their experiments synthesized leakage-resilient versions of the
benchmarks only for the case $n$ is $1$, in our experiments, we do the
synthesis for the cases $n=2$, $n=3$, and $n=4$.

We describe the decomposition strategy that is implemented in the
prototype.  Specifically, we give details of the online decomposition
process mentioned in Section~\ref{sec:decomp}.  The implementation
employs the following {\em adaptive} strategy when decomposing
systematically based on a circuit size bound.  First, we set the bound
to be circuits of some fixed height (the experiments use height $3$),
and decompose based on the bound.  However, in some cases, the bound
can be too large for the monolithic constraint-based synthesis
algorithm to complete in a reasonable amount of time.  Therefore, we
set a limit on the time that the constraint-based synthesis can spend
on constraint solving, and when the time limit is exceeded, we further
decompose that sub-circuit by using a smaller bound.  Further, when
the time limit is exceeded even with the smallest bound, or the number
of secrets in the sub-circuit exceeds a certain bound (this is done to
prevent out of memory exceptions in the SMT solver), we use a pre-made
leakage resilient circuit.  Recall from Remark~\ref{rem:comp} that the
compositionality property ensures the correctness of such a strategy.

Tables~\ref{tbl:expmono} and \ref{tbl:expcomp} summarize the
experiment results.  Table~\ref{tbl:expcomp} shows the results of the
compositional algorithm.  Table~\ref{tbl:expmono} shows the results
obtained by the``monolithic-only'' version of the algorithm.
Specifically, the monolithic-only results are obtained by, first
applying the parallel compositionality property
(cf.~Section~\ref{sec:parcomp}) to divide the given circuit into
separate sub-circuit for each output, and then applying the
constraint-based monolithic synthesis to each sub-circuit and
combining the results.  (The per-output parallel decomposition is
needed because the constraint-based monolithic synthesis only takes
one-output circuits as input -- cf.~Section~\ref{sec:mono}.)  The
monolithic-only algorithm is essentially the monolithic algorithm of
Eldib and Wang~\cite{DBLP:conf/cav/EldibW14} with the improvements
described in Section~\ref{sec:mono}.

We describe the table legends.  The column labeled ``\legname{}''
shows the benchmark program names.  The columns labeled ``\legtime{}''
show the time taken to synthesize the circuit.  Here, ``T/O'' means
that the algorithm was not able to finish within the time limit, and
``M/O'' means that the algorithm aborted due to an out of memory
error.  The columns labeled ``\legsize{}'' show the number of gates in
the synthesized circuit, and the columns labeled ``\legnumrands{}''
show the number of randoms in the synthesized circuit.  

The columns labeled ``\legmaxtpc{}'' in Table~\ref{tbl:expcomp} is the
maximum time spent by the algorithm to synthesizing a sub-circuit in
the compositional algorithm.  Our prototype implementation currently
implements a sequential version of the compositional algorithm where
each sub-circuit is synthesized one at a time in sequence.  However,
in principle, the sub-circuits may be synthesized simultaneously in
parallel, and the columns \legmaxtpc{} give a good estimate of the
efficiency of such a parallel version of the compositional algorithm.
We also remark that the current prototype implementation is
unoptimized and does not ``cache'' the synthesis results, and
therefore, it naively applies the synthesis repeatedly on the same
sub-circuits that have been synthesized previously.

As we can see from the tables, the monolithic-only approach is not
able to finish on many of the benchmarks, even for the case $n = 2$.
In particular, it does not finish on any of the large benchmarks (as
one can see from the sizes of the synthesized circuits, P13 to P18 are
of considerable sizes).  By contrast, the compositional approach was
able to successfully complete the synthesis for all instances.  We
observe that the compositional approach was faster for the larger $n$ in
some cases (e.g., P9 with $n=3$ vs. $n=4$).  While this is partly due
to the unpredictable nature of the back-end SMT solver, it is also an
artifact of the decomposition strategy described above.  More
specifically, in some cases, the algorithm more quickly detects (e.g.,
in earlier iterations of the constraint-based synthesis's CEGAR loop)
that the decomposition bound should be reduced for the current
sub-circuit, which can lead to a faster overall running time.

We also observe that the sizes of the circuits synthesized by the
compositional approach are quite comparable to those of the ones
synthesized by the monolithic-only approach, and the same observation
can be made to the numbers of randoms in the synthesized circuits.  In
fact, in one case (P2 with $n=3$), the compositional approach
synthesized a circuit that is smaller than the one synthesized by the
monolithic-only approach.  While this is due in part to the fact that
the monolithic synthesis algorithm optimizes circuit height rather
than size, in general, it is not inconceivable for the compositional
approach to do better than the monolithic-only approach in terms of
the quality of the synthesized circuit.  This is because the
compositional method could make a better use of the circuit structure
by sharing synthesized sub-circuits, and also because of parsimonious
use of randoms allowed by the compositionality property.  We remark
that the circuits synthesized by our method are orders of magnitude
smaller than those obtained by naively applying the original
construction of Ishai et a.~\cite{DBLP:conf/crypto/IshaiSW03}
(cf.~Theorem~\ref{thm:ishai03}).  For instance, for P18 with $n=4$,
the construction would produce a circuit with more than 3600k gates
and 500k randoms.

\texcomment{P3 has random input-output behavior.  It behaves as not(k1
  AND k2) when r1 and r2 are (1,1), and for other values of r1 and r2,
  it behaves as (k1 AND k2).}

\texcomment{P4 has random input-output behavior.  It behaves as (k1 AND k2) when r1 and r2 are (0,0) or (1,0), and for other values of r1 and r2,
  it behaves as (k1 AND not k2).}

\texcomment{
\begin{table*}[tbp]
\scriptsize
\begin{center}
\begin{tabular}{|l|c|c|c|c|c|c|c|c|c|c|c|c|c|c|c|c|}
\hline
\multicolumn{2}{|c|}{Program} & \multicolumn{6}{|c|}{Monolithic Only} & \multicolumn{9}{|c|}{Compositional}\\
\hline
\multirow{2}{*}{Name} & \multirow{2}{*}{Size} & 
\multicolumn{2}{|c|}{$n=1$} & \multicolumn{2}{|c|}{$n=2$} & \multicolumn{2}{|c|}{$n=3$} & 
\multicolumn{3}{|c|}{$n=1$} & \multicolumn{3}{|c|}{$n=2$} & \multicolumn{3}{|c|}{$n=3$} \\
\cline{3-17}
& & Time & Size & Time & Size & Time & Size 
& Time & Max-Time/Comp & Size & Time & Max-Time/Comp & Size  & Time & Max-Time/Comp & Size \\
\hline
\end{tabular}
\end{center}
\end{table*}
}

\section{Related Work}

\label{sec:related}

\noindent
{\bf Verification and Synthesis for $n$-Threshold-Probing Model of Leakage Resilience.}
The $n$-threshold-probing model of leakage resilience was proposed in
the seminal work by Ishai et al.~\cite{DBLP:conf/crypto/IshaiSW03}.
The subsequent research has proposed methods to build circuits that
are leakage resilient according to the
model~\cite{DBLP:conf/ches/RivainP10,DBLP:conf/fse/CoronPRR13,DBLP:conf/eurocrypt/DucDF14,DBLP:conf/tacas/EldibWS14,DBLP:conf/cav/EldibW14,DBLP:conf/eurocrypt/BartheBDFGS15,DBLP:journals/iacr/BartheBDFG15}.
Along this direction, the two branches of work that are most relevant
to ours are {\em verification} which aims at verifying whether the
given (hand-crated) circuit is leakage
resilient~\cite{DBLP:conf/tacas/EldibWS14,DBLP:conf/eurocrypt/BartheBDFGS15,DBLP:journals/iacr/BartheBDFG15},
and {\em synthesis} which aims at generating leakage resilient
circuits automatically~\cite{DBLP:conf/cav/EldibW14}.  In particular,
our constraint-based monolithic synthesis algorithm is directly
inspired and extends the algorithm given by Eldib and
Wang~\cite{DBLP:conf/cav/EldibW14}.  As remarked before, their method
only handles the case $n = 1$.  By contrast, we propose the first
compositional synthesis approach that also works for arbitrary values
of $n$.

On the verification side, the constraint-based verification method
proposed in \cite{DBLP:conf/tacas/EldibWS14} is a precursor to their
synthesis work discussed above, and it is similar to the candidate
checking phase of the synthesis.  Recent papers by Barthe et
al.~\cite{DBLP:conf/eurocrypt/BartheBDFGS15,DBLP:journals/iacr/BartheBDFG15}
investigate verification methods that aim to also support the case $n
> 1$.  Compositional verification is considered in
\cite{DBLP:journals/iacr/BartheBDFG15}.  As remarked before, in
contrast to the compositionality property described in our paper,
their composition does not require disjointness of the randoms in the
composed components but instead require additional randoms at the site
of the composition.  We believe that the compositionality property
investigated in their work is complementary to ours, and we leave for
future work to combine these facets of compositionality.

We remark that synthesis is substantially harder than verification.
Indeed, in our experience with the prototype implementation, most of
the running time is consumed by the candidate finding part of the
monolithic synthesis process with relatively little time spent by the
candidate checking part.

\vspace{1em}
\noindent
{\bf Quantitative Information Flow.}
{\em Quantitative information flow}
(QIF)~\cite{DBLP:conf/popl/Malacaria07,DBLP:conf/fossacs/Smith09,DBLP:conf/csfw/YasuokaT10,DBLP:conf/csfw/AlvimCPS12}
is a formal measure of information leak, which is based on an
information theoretic notion such as Shannon entropy, R\`{e}nyi
entropy, and channel capacity.  Recently, researchers have proposed to
QIF-based methods for side channel attack
resilience~\cite{DBLP:journals/jcs/KopfB11,DBLP:conf/cav/KopfMO12}
whereby static analysis techniques for checking and inferring QIF are
applied to side channels.  

A direct comparison with the $n$-threshold-probing model of leakage
resilience is difficult because the security ensured by the QIF
approaches are typically not of the form in which the adversary's
capability is restricted in some way, and some (small amount of) leak
is usually permitted.  We remark that, as also observed
by~\cite{DBLP:conf/eurocrypt/BartheBDFGS15}, in the terminology of
information flow theory, the $n$-threshold-probing model of leakage
resilience corresponds to enforcing {\em probabilistic
  non-interference}~\cite{DBLP:conf/sp/Gray91} on every $n$-tuple of
the circuit's internal nodes.

\section{Conclusion}
\label{sec:concl}

We have presented a new approach to synthesizing circuits that are
leakage resilient according to the $n$-threshold-probing model.  We
have shown that the leakage-resilience model admits a certain
compositionality property, which roughly says that composing
$n$-leakage-resilient circuits results in an $n$-leakage-resilient
circuit, assuming the disjointness of the randoms in the composed
circuit components.  Then, by utilizing the property, we have designed
a compositional synthesis algorithm that divides the given circuit
into smaller sub-circuits, synthesizes $n$-leakage-resilient versions
of them containing disjoint randoms, and combines the results to obtain
an $n$-leakage-resilient version of the whole.

\bibliography{lr-lncs}
\bibliographystyle{abbrv}

\addappendix{
\appendix
\section{Proofs}

\subsection{Proof of Theorem~\ref{thm:parcomp}}

\begin{reftheorem}{\ref{thm:parcomp}}
Let $P_1$ and $P_2$ be $n$-leakage-resilient circuits having disjoint
randoms.  Then, $\parcomp{P_1}{P_2}$ is also $n$-leakage-resilient.
\end{reftheorem}
\begin{proof}
  Without loss of generality, we assume that $P_1$ and $P_2$ have the
  same public and secret inputs (otherwise, they can be made so by
  adding dummy connections).  Let $\ve{\rvar_1}$
  (resp.~$\ve{\rvar_2}$) be the randoms in $P_1$ (resp.~$P_2$).  Let
  $\ve{\pvar}$ (resp.~$\ve{\svar}$) be the public (resp.~secret)
  inputs of $\parcomp{P_1}{P_2}$.  Then, the result follows because
  for any $\ve{b_\pvar} \in \aset{0,1}^{|\ve{\pvar}|}$,
  $\ve{b_\svar},\ve{b_\svar}' \in \aset{0,1}^{|\ve{\svar}|}$,
  $\ve{\nd} \in \nodes{\parcomp{P_1}{P_2}}^n$, and $\ve{b_\nd} \in
  \aset{0,1}^n$, we have
  \[
  \begin{array}{l}
    \cnt{\parcomp{P_1}{P_2}}{\ve{b_\pvar}}{\ve{b_\svar}}{\ve{\nd}}{\ve{b_\nd}} = \\
    \hspace{0.3em}
    2^{|\ve{r_2}|}\mult \cnt{P_1}{\ve{b_\pvar}}{\ve{b_\svar}}{\ve{\nd_1}}{\ve{b_{\nd_1}}} + 2^{|\ve{r_1}|}\mult \cnt{P_2}{\ve{b_\pvar}}{\ve{b_\svar}}{\ve{\nd_2}}{\ve{b_{\nd_2}}} =\\
    \hspace{0.3em}
    2^{|\ve{r_2}|}\mult \cnt{P_1}{\ve{b_\pvar}}{\ve{b_\svar'}}{\ve{\nd_1}}{\ve{b_{\nd_1}}} + 2^{|\ve{r_1}|}\mult \cnt{P_2}{\ve{b_\pvar}}{\ve{b_\svar'}}{\ve{\nd_2}}{\ve{b_{\nd_2}}} =\\
    \hspace{0.3em}
    \cnt{\parcomp{P_1}{P_2}}{\ve{b_\pvar}}{\ve{b_\svar}'}{\ve{\nd}}{\ve{b_\nd}}
\end{array}
  \]
  where $\ve{\nd_1}$ and $\ve{\nd_2}$ partition $\ve{\nd}$ to the
  nodes inside $P_1$ and those inside $P_2$, and $\ve{b_{\nd_1}}$
  (resp.~$\ve{b_{\nd_2}}$) is the part of $\ve{b_\nd}$ corresponding
  to the valuations of $\ve{\nd_1}$ (resp.~$\ve{\nd_2}$).
\qed
\end{proof}

\subsection{Proof of Theorem~\ref{thm:seqmultinput}}

For simplicity, we assume that the circuits have no public inputs, and
$P_2$ only has one (non-split) output.  But, the result can be easily
extended to the case with public inputs and the case where $P_2$ has
multiple outputs.  

We first focus on the case where $P_2$ has only one
(non-split) input.  That is, we consider the case where we are given
$P_1$ that takes $x$ many inputs and $P_2$ that takes one input.
Therefore, the goal is to show the following.
\begin{theorem}
\label{thm:seqoneinput}
Let $P_1$ be an $x$-input $n$-leakage-resilient circuit and $P_2$ be
an $1$-input $n$-leakage-resilient circuit, having disjoint randoms.
Then, $\compcirc{P_1}{P_2}$ is $n$-leakage-resilient.
\end{theorem}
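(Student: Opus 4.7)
The plan is to reduce the statement to a short Fourier-analytic calculation on the $n+1$ split wires $\ve{y} = y_1,\dots,y_{n+1}$ connecting $P_1$'s output-decoder-free part to $P_2$'s input-encoder-free part $P_2'$. Let $\ve{\nd_1}$ and $\ve{\nd_2}$ be the probes inside $P_1$ and $P_2'$ respectively, with $n_i = |\ve{\nd_i}|$ and $n_1 + n_2 \le n$. Writing $\ve{\rvar_1}$ for $P_1$'s randoms, $\ve{\rvar_2}^i$ for $P_2$'s randoms other than those of its (removed) input encoder, and setting
\[
A(\ve{b_\svar},\ve{b_{\nd_1}},\ve{y}) = |\aset{\ve{\rvar_1} : P_1\text{ produces probes }\ve{b_{\nd_1}}\text{ and split output }\ve{y}}|,
\]
\[
B(\ve{y},\ve{b_{\nd_2}}) = |\aset{\ve{\rvar_2}^i : P_2'\text{ on input }\ve{y}\text{ produces probes }\ve{b_{\nd_2}}}|,
\]
disjointness of the two random sets gives $\cnt{\compcirc{P_1}{P_2}}{\emptyset}{\ve{b_\svar}}{\ve{\nd_1},\ve{\nd_2}}{\ve{b_{\nd_1}},\ve{b_{\nd_2}}} = \sum_{\ve{y}} A \cdot B$, and Parseval on the characters $\chi_T(\ve{y}) = (-1)^{\sum_{i\in T} y_i}$ yields $\sum_{\ve{y}} A \cdot B = 2^{n+1}\sum_T \hat{A}_T(\ve{b_\svar},\ve{b_{\nd_1}})\,\hat{B}_T(\ve{b_{\nd_2}})$. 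Hence it suffices to show every summand is independent of $\ve{b_\svar}$.

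Two Fourier constraints, one per component, deliver this. For $P_1$, applying $n$-NI to $\ve{\nd_1}$ together with any subset $\ve{y}_I$ of split outputs (which are observable nodes of $P_1$) of size $|I| \le n - n_1$ shows that $\sum_{\ve{y}_{\bar I}} A$ is independent of $\ve{b_\svar}$; inverting the Fourier transform then forces $\hat{A}_T$ to be independent of $\ve{b_\svar}$ for every $T$ with $|T| \le n - n_1$. For $P_2$, applying $n$-NI to $\ve{\nd_2}$ together with any subset $\ve{y}_J$ of split input wires (observable nodes of $P_2$) with $|J| \le n - n_2$, and using that $P_2$'s input encoder makes $\ve{y}$ uniform on $\aset{\ve{y} : \bigoplus \ve{y} = \svar_2}$, one derives that $\sum_{\ve{y}_{\bar J}\,:\, \bigoplus \ve{y}_{\bar J} = s'} B$ is independent of $s'$; a short character computation then turns this into $\hat{B}_T = 0$ for every $T \supseteq \bar J$, hence for every $T$ with $|T| \ge n_2 + 1$.

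Combining the two constraints, the hypothesis $n_1 + n_2 \le n$ gives $n - n_1 \ge n_2$, so every $T \subseteq \{1,\dots,n+1\}$ lies in at least one of the regimes: either $|T| \le n - n_1$, in which case $\hat{A}_T$ is independent of $\ve{b_\svar}$, or $|T| \ge n_2 + 1$, in which case $\hat{B}_T = 0$. Either way the $T$-term in $\sum_T \hat{A}_T \hat{B}_T$ is independent of $\ve{b_\svar}$, which gives the claimed independence of $\cnt{\compcirc{P_1}{P_2}}{\emptyset}{\ve{b_\svar}}{\ve{\nd}}{\ve{b_\nd}}$. The extension to the $y$-input statement of Theorem~\ref{thm:seqmultinput} then runs the same argument on the product Fourier basis over $\aset{0,1}^{y(n+1)}$; the factorisation of the analogue of $A$ across the $P_{1i}$'s — needed to separate the product-Fourier coefficients — is precisely where the disjointness of each $P_{1i}$'s randoms enters.

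I expect the main obstacle to be the second Fourier constraint: translating $P_2$'s NI (stated for the full circuit including its input encoder) into the coset-sum statement about $P_2'$ requires carefully tracking how observing $\ve{y}_J$ in $P_2$ and varying $\svar_2$ causes $\bigoplus \ve{y}_{\bar J}$ to traverse both residues modulo $2$, so that $n$-NI of the full $P_2$ yields a uniform statement about $P_2'$ alone. From this viewpoint the bound $n_1 + n_2 \le n$ is visibly sharp — any larger budget would open a gap of $|T|$ values covered by neither constraint — consistent with the sharpness shown in Example~\ref{ex:seqtight}.
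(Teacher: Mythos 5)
Your proof is correct, but it takes a genuinely different and arguably more illuminating route than the paper's. The paper casts everything in terms of truth tables: it defines a notion of $\safe{x}{m}{\ell}$ safety that mirrors non-interference under $m$ adversary observations, shows that table composition $\comptbl{}{}_{n+1}$ models circuit composition, and then reduces the claim (via Lemmas~\ref{lem:tbllrgen}, \ref{lem:safety}, \ref{lem:decomp}, \ref{lem:outsame}) to a purely combinatorial statement about $(n+1)$-column tables (Lemma~\ref{lem:main}), proved by a delicate simultaneous induction on $n$, $m_1$, $m_2$ that case-splits on which columns are fixed in $t_1$ versus $t_2$. Your approach instead pushes all the combinatorics through the Fourier transform over $\mathbb{F}_2^{n+1}$ on the boundary wires $\ve{y}$: disjointness of randoms gives the convolution formula $\sum_{\ve{y}} A\cdot B$, Parseval turns it into $\sum_T \hat A_T \hat B_T$, and the two NI hypotheses become the two complementary Fourier-support constraints — $\hat A_T$ secret-independent for $|T|\le n-n_1$, and $\hat B_T = 0$ for $|T|\ge n_2+1$ (the latter because for $J = T^c$ with $|J|\le n-n_2$, the $n$-NI of $P_2$ together with the coset parametrisation of the input encoder forces the two parity-cosets of $\ve{y}_{\bar J}$ to carry equal mass, which is exactly the vanishing of $\hat B_{\bar J}$). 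The budget $n_1+n_2\le n$ then says every $T$ falls into one regime. This is sharper conceptually: it makes visible exactly why the bound is $n$ and why Example~\ref{ex:seqtight} is tight, and it sidesteps the paper's determinism assumption ($\dettbl{x}{n+1}$) entirely, which the paper needs in Lemma~\ref{lem:outsame}. The paper's route buys elementarity and a uniform treatment of the non-boundary columns via table restriction, but at the cost of a heavier induction. One small inaccuracy in your write-up: from a single $J$ the character computation gives $\hat B_{\bar J}=0$ directly, not $\hat B_T = 0$ for all $T\supseteq \bar J$; you need to range $J$ over all subsets of size $\le n-n_2$ to cover all $T$ with $|T|\ge n_2+1$, but this is immediate and the final conclusion stands.
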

The general case where $P_2$ takes multiple inputs follows directly
from the proof of this case and is deferred to
Section~\ref{sec:multinputp2}.

We begin by introducing preliminary definitions and notations that are
used in the proof.  We represent circuits by their {\em truth tables}
(or, simply {\em tables}) whose columns are the (observable) nodes of
the circuit and rows are the possible valuation of the nodes.
Formally, for an $x$-input circuit $P$ that is \plusone{n}-share
split, the table of $P$, $\tbl(P)$, has its first $(n+1) \mult x$
columns correspond to the input nodes,\footnote{For simplicity, we assume that each secret (non-split) input in $P_1$ and $P_2$ is split into shares by a single input encoder.  The proof can be adopted to the case where a secret is encoded by multiple input encoders.} followed by some number of
intermediate nodes (which may include additional randoms), and
followed by $n+1$ columns corresponding to the output nodes.  We write
$\rows{t}$ (resp.~$\cols{t}$) for the number of rows (resp.~columns)
of the table $t$.  We write $t[i]$ for the $i$th row of the table, and
$t(i,j)$ for the value of the $(i,j)$th entry of $t$.

\begin{example}
\label{ex:tbl}
\em Below is a table for the \plusone{1}-share split $1$-input
circuit.  The circuit has 6 nodes $\nd_1,\dots,\nd_6$ where the
(split) input nodes are $\nd_1,\nd_2$, the intermediate nodes are
$\nd_3,\nd_4$ such that $\nd_3$ is an additional random node, and the
(split) output nodes are $\nd_5,\nd_6$.  The circuit is wired so that
$\nd_4 = \nd_1 \oplus 1$, $\nd_5 = \nd_4 \oplus \nd_3$, and $\nd_6 =
\nd_2 \oplus \nd_3$.  That is, the circuit outputs the XOR of the
input with 1 (i.e., it is a negation circuit).
\[
\begin{array}{|c:c:c:c:c:c|}
\hline
\nd_1 & \nd_2 & \nd_3 & \nd_4 & \nd_5 & \nd_6 \\
\hline
0 & 0 & 0 & 1 & 1 & 0 \\\hdashline
0 & 0 & 1 & 1 & 0 & 1 \\\hdashline
1 & 1 & 0 & 0 & 0 & 1 \\\hdashline
1 & 1 & 1 & 0 & 1 & 0 \\\hdashline
1 & 0 & 0 & 0 & 0 & 0 \\\hdashline
1 & 0 & 1 & 0 & 1 & 1 \\\hdashline
0 & 1 & 0 & 1 & 1 & 1 \\\hdashline
0 & 1 & 1 & 1 & 0 & 0 \\
\hline
\end{array}
\]
\end{example}

\subsubsection{Leakage Resilience as Table Safety}

The notion of $n$-leakage-resilience can be formalized as a certain
property of tables, which we call {\em safety}.  Informally, a table
being $n$ safe means that, the number of rows of the table
corresponding to each (non-split) input remains the same even after
reducing the table $n$ times by choosing a column and a bit and
removing the rows that do not have that bit at that column.  We
formalize the notion below.

Let $\ve{b} \in \aset{0,1}^x$ and $m > 0$.  For $\ve{a} \in
\aset{0,1}^*$ of length at least $x\mult m$, we write
$\ve{a}\match{x}{m}\ve{b}$ if for each $i \in \aset{1, \dots, x}$,
$b_i = \bigoplus_{(i-1)\mult m < j \leq i\mult m} a_j$.  Let $t$ be a
table such that $\cols{t} \geq x\mult m$.  The {\em $\ve{b}$
  restriction of $t$ wrt $m$}, written $\brest{\ve{b}}{m}{t}$, is
defined to be a table comprising the rows $i$ of $t$ such that
$t[i]\match{x}{m}\ve{b}$.  Intuitively, $\brest{\ve{b}}{n+1}{t}$ is
the sub-table of $t$ comprising the rows with the non-split input
$\ve{b}$ (for the \plusone{n} split case).  For example, let $t$ be
the table from Example~\ref{ex:tbl}, then $\brest{0}{2}{t}$ is the
table comprising the first four rows of $t$, and $\brest{1}{2}{t}$ is
the table comprising the remaining (i.e., the last four) rows of $t$.

For $(j,b) \in \aset{1,\dots,\cols{t}}\times\aset{0,1}$, we write $t'
\reduce{j}{b} t$ if $t'$ is $t$ but with the rows having the value
$\neg b$ at the $j$th column removed.  Intuitively, $t \reduce{j}{b}
\tbl(P)$ means that $t$ is the state of the circuit $P$ after the
adversary observed the value $b$ at the $j$th node.  For $m \geq 0$, we
write $t' \reductions{m} t$ if $t'$ is obtained by $m$ adversary
observations from $t$, that is, if there exist
$(j_1,b_1),\dots,(j_m,b_m)$ such that $t = t_1$, $t' = t_m$, and
$t_{i+1} \reduce{j_i}{b_i} t_i$ for each $i \in \aset{1,\dots,m-1}$.

Let $m \geq 0$ and $\ell > 0$.  We define the notion {\em
  $\safe{x}{m}{\ell}$ safe} inductively as follows.  We say that a
table $t$ is $\safe{x}{0}{\ell}$ safe if for any $\ve{b},\ve{b}' \in
\aset{0,1}^x$, $\rows{\brest{\ve{b}}{\ell}{t}} =
\rows{\brest{\ve{b}'}{\ell}{t}}$.  For $m > 0$, $t$ is
$\safe{x}{m}{\ell}$ safe if for any $t'$ such that $t' \reductions{1}
t$, $t'$ is $\safe{x}{m-1}{\ell}$ safe.  Note that an empty table is
$\safe{x}{m}{\ell}$ safe for any $x$, $m$ and $\ell$.  Also, it is
immediate from the definition that safety is {\em monotonic}, that is,
if $t$ is $\safe{x}{m_1}{\ell}$ safe and $m_1 \geq m_2$ then $t$ is
also $\safe{x}{m_2}{\ell}$ safe.\texcomment{Proof: 1->0: Suppose not 0
  safe show not 1 safe: if any column has the same number of 0's and
  1's, then just pick it table don't change and it's not 1 safe.
  Otherwise, pick the column having diff number of 0's and 1's and not
  1 safe.  n->n-1 for n>0 is trivial.}  Also, the following is
immediate from the definition.
\begin{lemma}
\label{lem:safety}
Suppose $m_1 \leq m_2$.  Then, $t$ is $\safe{x}{m_2}{\ell}$ safe iff
for any $t'$ such that $t' \reductions{m_1} t$, $t'$ is
$\safe{x}{m_2-m_1}{\ell}$ safe.
\end{lemma}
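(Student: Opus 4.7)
The plan is a straightforward induction on $m_1$. The base case $m_1 = 0$ is immediate, because $t' \reductions{0} t$ forces $t' = t$, and both sides of the biconditional then reduce to the statement ``$t$ is $\safe{x}{m_2}{\ell}$ safe.''

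For the inductive step, I assume the claim for $m_1$ and prove it for $m_1 + 1$ (so $m_1 + 1 \leq m_2$, whence $m_2 - m_1 \geq 1$). For the forward direction, suppose $t$ is $\safe{x}{m_2}{\ell}$ safe and $t' \reductions{m_1+1} t$. I factor the reduction as $t' \reductions{1} t''$ and $t'' \reductions{m_1} t$ for some intermediate $t''$. The induction hypothesis (for $m_1$) yields that $t''$ is $\safe{x}{m_2 - m_1}{\ell}$ safe; since $m_2 - m_1 \geq 1$, unfolding the inductive definition of safety one step gives that $t'$ is $\safe{x}{m_2 - m_1 - 1}{\ell}$ safe, i.e., $\safe{x}{m_2 - (m_1+1)}{\ell}$ safe, as required.

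For the reverse direction, assume every $(m_1+1)$-step reduction of $t$ is $\safe{x}{m_2 - m_1 - 1}{\ell}$ safe. By the induction hypothesis it suffices to show that every $m_1$-step reduction $t''$ of $t$ is $\safe{x}{m_2 - m_1}{\ell}$ safe. Since $m_2 - m_1 \geq 1$, the inductive clause of the definition reduces this to showing that every one-step reduction $t'$ of $t''$ is $\safe{x}{m_2 - m_1 - 1}{\ell}$ safe; but any such $t'$ satisfies $t' \reductions{m_1+1} t$ by concatenating the two reductions, so the hypothesis applies. I do not anticipate a substantive obstacle; the only bookkeeping concern is keeping the arithmetic in the regime $m_2 - m_1 \geq 1$ so that the inductive clause of safety is available, which is guaranteed by $m_1 + 1 \leq m_2$. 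Conceptually, the lemma just records an ``associativity'' of reductions with respect to the safety definition, and the inductive definition is tailored to permit this chunking.
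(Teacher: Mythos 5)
Your proof is correct and is exactly the straightforward formalization the paper is alluding to when it states the lemma is "immediate from the definition" without spelling out a proof. The induction on $m_1$, factoring an $(m_1{+}1)$-step reduction as a $1$-step reduction of an $m_1$-step reduction, and invoking the inductive clause of the safety definition (valid since $m_2 - m_1 \geq 1$) is the natural argument; the base case and the treatment of empty tables (which the paper notes are vacuously safe) are handled correctly.
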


It is easy to see that the notion of safety captures leakage
resilience.  That is, we have the following.
\begin{lemma}
\label{lem:tbllrgen}
The following are equivalent.
\begin{itemize}
\item[(1)] $x$-input \plusone{n}-split $P$ is $n$-leakage-resilient.
\item[(2)] $\tbl(P)$ is $\safe{x}{n}{n+1}$ safe.
\item[(3)] For any $t$ such that $t \reductions{n} \tbl(P)$, $t$ is
  $\safe{x}{0}{n+1}$ safe.
\end{itemize}
\end{lemma}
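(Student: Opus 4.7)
The plan is to dispose of the equivalence (2) $\Leftrightarrow$ (3) first, as it is immediate from Lemma~\ref{lem:safety}: applying that lemma with $m_1 = m_2 = n$, $t = \tbl(P)$, and $\ell = n+1$ yields exactly the statement that $\tbl(P)$ is $\safe{x}{n}{n+1}$ safe iff every $n$-step reduction of $\tbl(P)$ is $\safe{x}{0}{n+1}$ safe. The real content of the lemma is therefore the equivalence (1) $\Leftrightarrow$ (3), for which I need to line up the table formalism with the distribution-based definition of leakage resilience from Section~\ref{sec:prelim}.

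To set up that alignment, I would first establish the following dictionary between the two views. Each row of $\tbl(P)$ corresponds uniquely to a store $\store$ assigning values to the non-split secret inputs and to the randoms $\ve{\rvar}$ of $P$ (we are assuming no public inputs here), with the entries of the row being the valuations $\eval{\store}{P}{\cdot}$ of the observable nodes. Under this identification, two facts hold. First, $\brest{\ve{b}}{n+1}{\tbl(P)}$ picks out precisely those stores whose non-split secret input equals $\ve{b}$, since by construction of the input encoder the $n+1$ share columns of each non-split secret XOR to its non-split value, which is exactly the condition imposed by the restriction. Second, a reduction sequence $t \reductions{n} \tbl(P)$ is the same data as a choice of node vector $\ve{\nd} \in \nodes{P}^n$ together with a value vector $\ve{b_\nd} \in \aset{0,1}^n$, and the resulting $t$ consists of exactly those stores $\store$ satisfying $\eval{\store}{P}{\ve{\nd}} = \ve{b_\nd}$.

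Combining the two facts, for $t$ obtained from $\tbl(P)$ by the reductions associated with $(\ve{\nd}, \ve{b_\nd})$, the row count $\rows{\brest{\ve{b}}{n+1}{t}}$ is exactly $\cnt{P}{\ve{b_\pvar}}{\ve{b}}{\ve{\nd}}{\ve{b_\nd}}$ from Section~\ref{sec:prelim} (with $\ve{b_\pvar}$ read as the empty sequence). Condition (3) thus reads exactly as the assertion that, for every $\ve{\nd} \in \nodes{P}^n$ and every $\ve{b_\nd}$, this count is independent of $\ve{b}$; and by the definition of $\mu$ as the normalized count, that is precisely the equality of joint distributions $\dist{P}{\ve{b_\pvar}}{\ve{b}}{\ve{\nd}} = \dist{P}{\ve{b_\pvar}}{\ve{b}'}{\ve{\nd}}$ demanded by (1). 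The only bookkeeping point, and the closest thing to a difficulty here, is the treatment of repeated column selections in $\reductions{n}$: a repeat with contradictory bits yields the empty table, consistent with the fact that the corresponding count vanishes when $\ve{\nd}$ has a repeated node at which $\ve{b_\nd}$ disagrees. Once that is checked, all three equivalences follow by unfolding definitions, and I do not anticipate any substantive obstacle beyond setting up this translation carefully.
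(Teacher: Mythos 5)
Your proof is correct and follows essentially the same route as the paper's: the paper dispatches (2) $\Leftrightarrow$ (3) via Lemma~\ref{lem:safety} and asserts (1) $\Leftrightarrow$ (2) by definition-unfolding, whereas you dispatch (2) $\Leftrightarrow$ (3) the same way and then flesh out the definition-unfolding for (1) $\Leftrightarrow$ (3) instead, which is a harmless reroute through the same triangle. Your dictionary (rows of $\tbl(P)$ as stores, $\brest{\ve{b}}{n+1}{t}$ as fixing the non-split secret, an $n$-fold reduction as a choice of $\ve{\nd},\ve{b_\nd}$ with row counts matching $\#_P$) is exactly the content the paper elides with ``it is not hard to see,'' and your remark on repeated/contradictory column picks correctly closes the one subtlety.
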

\begin{proof}
  It is not hard to see that (2) is equivalent to
  $n$-leakage-resilience.  (2) $\Leftrightarrow$ (3) follows from
  Lemma~\ref{lem:safety}.
\end{proof}

\subsubsection{Circuit Composition as Table Composition}

We capture circuit composition by composition of tables.  More
precisely, we shall define the table composition operation
$\comptbl{}{}_m$ so that the composition of the tables
$\comptblm{\tbl(P_1)}{\tbl(P_2)}{n+1}$ is the table of the composed
circuit $\tbl(\compcirc{P_1}{P_2})$.

For $m > 0$, let us write $\hd{\ve{b}}{m}$ (resp.~$\tl{\ve{b}}{m}$)
for the first (resp.~last) $m$ elements of $\ve{b}$.  For $t_1$ and
$t_2$, and $m > 0$ such that $m \leq \cols{t_1}$ and $m \leq
\cols{t_2}$, we define $\comptblm{t_1}{t_2}{m}$ to be the
$\cols{t_1}+\cols{t_2} - m$ column table such that each $\ve{b} \in
\aset{0,1}^{\cols{t_1}+\cols{t_2} - m}$ appears $i_1 \mult i_2$ times
in the table where $i_1$ (resp.~$i_2$) is the number of times
$\hd{\ve{b}}{\cols{t_1}}$ appears in $t_1$
(resp.~$\tl{\ve{b}}{\cols{t_2}}$ appears in $t_2$).  (Therefore,
$\ve{b}$ occurs as a row of $\comptblm{t_1}{t_2}{m}$ iff
$\hd{\ve{b}}{\cols{t_1}}$ occurs as a row of $t_1$ and
$\tl{\ve{b}}{\cols{t_2}}$ occurs as a row of $t_2$.)  For $t_1$ and
$t_2$ such that $\cols{t_1} = \cols{t_2} = m$, we simply write
$\comptbl{t_1}{t_2}$ for $\comptblm{t_1}{t_2}{m}$.  Note that
$\comptbl{t_1}{t_2} = \comptbl{t_2}{t_1}$.

\texcomment{
\begin{example}
\em
Example of table composition \todo

\end{example}
}

It can be seen that the table composition correctly captures circuit
composition.  That is, for \plusone{n}-split circuits, we have
$\tbl(\compcirc{P_1}{P_2}) = \comptblm{\tbl(P_1)}{\tbl(P_2)}{n+1}$.
Also, it is immediate from the definitions that any attack on the
composed table can be simulated by an attack on the components.
More formally, we have the following.
\begin{lemma}
\label{lem:decomp}
Let $m,\ell > 0$.  Then, $t \reductions{m} \comptblm{t_1}{t_2}{\ell}$
iff there exist $m_1$, $m_2$, $t_1'$, and $t_2'$ such that $m_1+m_2 =
m$, $t_1' \reductions{m_1} t_1$, $t_2' \reductions{m_2} t_2$, and $t =
\comptblm{t_1'}{t_2'}{\ell}$.
\end{lemma}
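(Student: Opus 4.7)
I will reduce the entire statement to a one-step commutation identity and then iterate it. The key step is the following: for any $j \in \aset{1,\dots,\cols{t_1}}$ and $b \in \aset{0,1}$,
\[
\reduce{j}{b}\, \comptblm{t_1}{t_2}{\ell} = \comptblm{(\reduce{j}{b} t_1)}{t_2}{\ell},
\]
and symmetrically when one reduces a column $j'$ of $t_2$, which corresponds to column $n_1 - \ell + j'$ of the composition where $n_1 = \cols{t_1}$. When $j$ lies in the shared region $\aset{n_1-\ell+1,\dots,n_1}$, both forms apply, so a shared-column reduction on the composition can be absorbed into either component; by convention I will absorb such a reduction into the $t_1$ side.

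The key identity is proved by unfolding the definition of $\comptblm{\cdot}{\cdot}{\ell}$: every row $\ve{b}'$ of the composition has multiplicity $i_1 \mult i_2$, where $i_1$ and $i_2$ are the multiplicities of $\hd{\ve{b}'}{n_1}$ in $t_1$ and of $\tl{\ve{b}'}{\cols{t_2}}$ in $t_2$ respectively. Applying $\reduce{j}{b}$ to the composition multiplies this count by the indicator $[b'_j = b]$, while reducing $t_1$ first multiplies $i_1$ by the same indicator (for $j \leq n_1$ the $j$th entry of $\ve{b}'$ coincides with the $j$th entry of $\hd{\ve{b}'}{n_1}$). The $t_2$-side case is analogous, using that the $j'$th entry of $\tl{\ve{b}'}{\cols{t_2}}$ sits at position $n_1 - \ell + j'$ of $\ve{b}'$.

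For the forward direction, given a reduction sequence $(j_1,b_1),\dots,(j_m,b_m)$ witnessing $t \reductions{m} \comptblm{t_1}{t_2}{\ell}$, I induct on $m$. At each step, classify the selected column $j_i$: if it lies strictly inside the $t_1$-part (resp.~$t_2$-part) of the composition, apply the key identity to absorb the reduction into $t_1$ (resp.~$t_2$); otherwise it lies in the shared region and is absorbed into $t_1$. Letting $m_1$ and $m_2$ count the reductions sent to each side, the induction yields $t_1' \reductions{m_1} t_1$ and $t_2' \reductions{m_2} t_2$ with $m_1 + m_2 = m$ and $t = \comptblm{t_1'}{t_2'}{\ell}$. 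The backward direction is symmetric: translate each component-side reduction to its position in the composition and apply the key identity $m_1 + m_2$ times in sequence.

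The main obstacle is the treatment of the $\ell$ shared columns: a reduction on a shared column of the composition must be absorbed into exactly one component (never duplicated), and I must verify directly from the definition of $\comptblm{\cdot}{\cdot}{\ell}$ that the resulting multiplicities still match. Once the key identity is in hand, both directions become routine iterations.
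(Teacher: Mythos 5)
Your proposal is correct. The paper gives no explicit proof of this lemma, noting only that it is ``immediate from the definitions''; your one-step commutation identity---established by comparing row multiplicities $i_1\mult i_2$ directly from the definition of $\comptblm{t_1}{t_2}{\ell}$, and then iterated over the reduction sequence (with shared-column reductions absorbed into exactly one side, which your identity already covers since it holds for all $j\leq\cols{t_1}$)---is precisely the argument that the paper leaves implicit.
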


We define the notion of {\em deterministic} tables.  Roughly a table
being deterministic means that split input valuations for the same
non-split input leads to the same non-split output.  More formally,
for $m > 0$ and $\ve{b} \in \aset{0,1}^x$ let
$\splitouts{m}{\ve{b}}{t}$ be the $m$ column subtable of $t$
comprising the last $m$ columns of rows $i$ such that
$t[i]\match{x}{m}\ve{b}$.  We call $\splitouts{m}{\ve{b}}{t}$ the {\em
  output table} for the non-split input $\ve{b}$.  We write
$\nonsplitvals{t}$ for the $1$-column $\rows{t}$-row table $t'$ such
that for each row $i$, $t[i]\match{1}{\cols{t}}t'[i]$.  We say that
$t$ is {\em $\dettbl{x}{m}$ deterministic} if for each $\ve{b} \in
\aset{0,1}^x$, the elements of
$\nonsplitvals{\splitouts{m}{\ve{b}}{t}}$ are either all $0$ or all
$1$.  Because any $n$-leakage-resilient circuit with which we are
concerned is IO-equivalent to a random-free circuit, without loss of
generality, we may assume that the table $\tbl(P_1)$ (resp.~$\tbl(P_2)$)
is $\dettbl{x}{n+1}$ (resp.~$\dettbl{1}{n+1}$) deterministic.

Our goal is to show Theorem~\ref{thm:seqoneinput}.  That is, given
$n$-leakage-resilient circuits $P_1$ and $P_2$, the composed circuit
$\compcirc{P_1}{P_2}$ is also $n$-leakage-resilient.  By
Lemmas~\ref{lem:tbllrgen}, \ref{lem:safety}, and \ref{lem:decomp}, it
follows that it suffices to show that following.
\begin{theorem}
\label{thm:main}
Let $m_1 > 0$ and $m_2 > 0$ such that $m_1 + m_2 = n$.  Let $t_1$ and
$t_2$ be tables such that $t_1$ is $\dettbl{x}{n+1}$ deterministic and
$\safe{x}{m_1}{n+1}$ safe, and $t_2$ is $\dettbl{1}{n+1}$
deterministic and $\safe{1}{m_2}{n+1}$ safe.  Then,
$\comptblm{t_1}{t_2}{n+1}$ is $\safe{x}{0}{n+1}$ safe.
\end{theorem}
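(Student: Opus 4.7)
The plan is to reduce the desired $\safe{x}{0}{n+1}$ safety of $\comptblm{t_1}{t_2}{n+1}$ to a Fourier orthogonality identity on the shared split-value space $\aset{0,1}^{n+1}$, working in the Walsh--Hadamard basis $\chi_T(\ve{s}) = (-1)^{\sum_{i \in T} s_i}$.

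First, I would unfold the counting. For $\ve{b} \in \aset{0,1}^x$, let $N(\ve{b})$ be the number of rows of $\comptblm{t_1}{t_2}{n+1}$ with non-split input $\ve{b}$, let $N_1(\ve{b}, \ve{s})$ be the number of rows of $t_1$ with non-split input $\ve{b}$ and split output $\ve{s}$, and let $N_2(\ve{s})$ be the number of rows of $t_2$ with split input $\ve{s}$. From the definition of $\comptblm{\cdot}{\cdot}{n+1}$,
\[
N(\ve{b}) \ = \ \sum_{\ve{s} \in \aset{0,1}^{n+1}} N_1(\ve{b},\ve{s}) \cdot N_2(\ve{s}).
\]
So $\safe{x}{0}{n+1}$ safety amounts to $\sum_{\ve{s}} \Delta_1(\ve{s}) N_2(\ve{s}) = 0$ for every pair $\ve{b},\ve{b}' \in \aset{0,1}^x$, where $\Delta_1(\ve{s}) = N_1(\ve{b},\ve{s}) - N_1(\ve{b}',\ve{s})$.

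Second, I would extract the Fourier supports of $\Delta_1$ and $N_2$ from the two safety hypotheses. Applying $\safe{x}{m_1}{n+1}$ safety of $t_1$ to reductions that fix any $F \subseteq \{1,\dots,n+1\}$ of size $m_1$ in the output columns to arbitrary values gives $\sum_{\ve{s}: F \text{ fixed}} \Delta_1(\ve{s}) = 0$. Inverting the Hadamard matrix on the $F$-subcube forces $\hat{\Delta}_1(T) = 0$ for every $T \subseteq F$, and varying $F$ yields $\hat{\Delta}_1(T) = 0$ for all $|T| \leq m_1$, so $\hat{\Delta}_1$ is supported on degrees $\geq m_1 + 1$. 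Symmetrically, applying $\safe{1}{m_2}{n+1}$ safety of $t_2$ to reductions on input columns, the assertion that fixing $|F| = m_2$ bits gives equal counts for the two non-split inputs becomes $\sum_{\ve{s}: F \text{ fixed}} N_2(\ve{s}) \chi_{[n+1]}(\ve{s}) = 0$, where $\chi_{[n+1]}$ is the parity character over all $n+1$ coordinates. The same Hadamard inversion combined with the identity $\widehat{g \cdot \chi_{[n+1]}}(T) = \hat{g}(\overline{T})$ then gives $\hat{N}_2(U) = 0$ for every $|U| \geq (n+1) - m_2 = m_1 + 1$; equivalently, $\hat{N}_2$ is supported on degrees $\leq m_1$.

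Third, conclude by Plancherel:
\[
\sum_{\ve{s}} \Delta_1(\ve{s}) N_2(\ve{s}) \ = \ 2^{n+1} \sum_{T} \hat{\Delta}_1(T) \hat{N}_2(T) \ = \ 0,
\]
since the supports $\{|T| \geq m_1 + 1\}$ and $\{|T| \leq m_1\}$ are disjoint. This one-degree gap is precisely what the hypothesis $m_1 + m_2 = n$ buys; relaxing it to $m_1 + m_2 = n+1$ would let the two bands meet at degree $m_1+1$ and break the identity, matching the tightness observed in Example~\ref{ex:seqtight}. Hence $N(\ve{b}) = N(\ve{b}')$ for all $\ve{b},\ve{b}'$, proving $\safe{x}{0}{n+1}$ safety of $\comptblm{t_1}{t_2}{n+1}$.

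The main obstacle I expect is the transition from the abstract safety of $t_1$ and $t_2$ (which quantifies over reductions on \emph{any} column) to the specialized identities on output columns of $t_1$ and input columns of $t_2$ that fuel the Fourier argument---this needs some care in summing out intermediate columns and verifying that the projected statements inherit the right cancellations. The determinism hypotheses do not appear to enter the core argument; they come from the paper's standing assumption that the circuits being composed are IO-equivalent to random-free ones, which only affects how the composed table is interpreted semantically.
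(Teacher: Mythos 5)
Your Fourier-analytic argument is correct, and it takes a genuinely different route from the paper. The paper's proof proceeds by structural induction on tables: it first reduces, using determinism and Lemma~\ref{lem:outsame}, to comparing two non-split inputs of $t_1$ mapping to different outputs, then forms an $(n+1)$-column table $t_1'$ from the corresponding output tables and proves by a double induction (Lemma~\ref{lem:main}) on $n$, $m_1$, $m_2$ that $\comptbl{t_1'}{t_2^{\it in}}$ is $\safe{1}{0}{n+1}$ safe, with Lemma~\ref{lem:nsafetbl} supplying the base case. Your approach replaces that entire chain by a single Plancherel computation on $\aset{0,1}^{n+1}$: fixing any $m_1$ output columns of $t_1$ is a legal $m_1$-step reduction, so $\safe{x}{m_1}{n+1}$ safety yields $\sum_{\ve{s}\smallcolon s_F = \ve{v}} \Delta_1(\ve{s}) = 0$ for all $|F| = m_1$; Hadamard inversion kills $\hat{\Delta}_1$ on degrees $\leq m_1$, and the dual argument on the input columns of $t_2$ (twisted by $\chi_{[n+1]}$ because the non-split input is the parity) kills $\hat{N}_2$ on degrees $\geq m_1+1$, and the two spectral bands are disjoint exactly when $m_1 + m_2 = n$. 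The obstacle you flagged about passing from reductions on arbitrary columns to the output-column identities is in fact not an obstacle: the safety hypothesis is a universal quantification over reduction sequences, and the output-column restrictions are a subset of them, so the identities follow by specialization. Your observation that the determinism hypotheses do not enter is accurate and is a genuine improvement over the paper's argument, which invokes determinism via Lemma~\ref{lem:outsame}; you have established a slightly stronger statement. The one-degree gap also cleanly explains the tightness illustrated in Example~\ref{ex:seqtight}, which the paper's inductive proof does not make visible. In short: same theorem, same conclusion, but a more direct and arguably more illuminating proof.
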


We prove the theorem in the following subsection.

\subsubsection{Proof of Theorem~\ref{thm:main}}

First, we prove a few technical lemmas.  We write $t_1 \eqtbl t_2$ if
$t_1$ and $t_2$ are equivalent up to permutation of the rows.  The
next lemma shows that, for a table whose rows all sum up to a given
value, the numbers of $1$'s and $0$'s in each column uniquely
determine the table up to row permutation.
\begin{lemma}
\label{lem:01batch}
Let $\ve{v}$ be a vector of non-negative integers of length $m$.  Let
$b \in \aset{0,1}$. Let $t_1$ and $t_2$ be $m$ column tables with
$\rows{t_1} = \rows{t_2}$ such that for each $t_\ell$ ($\ell \in
\aset{1,2}$), 1.) for every row $\ve{b} \in t_\ell$, $b =
\bigoplus_{1\leq i \leq \cols{t}} b_i$ (i.e., every row sums up to the
same value), and 2.) for each column $j \in \aset{1,\dots,m}$, there
are exactly $v_j$ number of $1$'s in the $j$th column of $t_\ell$ (and
hence $\rows{t_\ell} - v_j$ number of $0$'s).  Then, $t_1 \eqtbl t_2$.
\end{lemma}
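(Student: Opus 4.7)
My plan is to prove the lemma by induction on the number of columns $m$. The base case $m = 1$ is immediate: condition (1) forces every row to equal $(b)$, so both tables consist of $\rows{t_1} = \rows{t_2}$ copies of this same row and therefore $t_1 \eqtbl t_2$ trivially. (The column-sum condition merely enforces $v_1 = b \mult \rows{t_1}$, which is automatic in this case.)

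For the inductive step with $m > 1$, I would split each $t_\ell$ according to the value of its first column, writing $t_\ell = t_\ell^0 \uplus t_\ell^1$ where $t_\ell^c$ is the sub-table of rows whose first entry is $c \in \aset{0,1}$. Condition (2) applied to column $1$ gives $\rows{t_\ell^1} = v_1$ and $\rows{t_\ell^0} = \rows{t_\ell} - v_1$, which agree across $\ell \in \aset{1,2}$. Deleting the first column yields $(m-1)$-column tables $u_\ell^c$ whose rows, by condition (1), XOR to $b \oplus c$. If the column-sum vectors of $u_1^c$ and $u_2^c$ coincide for each $c$, then the inductive hypothesis gives $u_1^c \eqtbl u_2^c$, and re-prepending the discarded column — which is constantly $c$ on $t_\ell^c$ — recovers $t_1^c \eqtbl t_2^c$ and hence $t_1 \eqtbl t_2$.

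The crux of the argument, and the step I expect to be the main obstacle, is establishing that the column sums of the sub-tables $u_\ell^c$ do not depend on $\ell$. The given hypotheses only control column sums of the whole table: for column $j \geq 2$ we have $v_j = v_j^0(\ell) + v_j^1(\ell)$, where $v_j^c(\ell)$ is the count of ones in column $j$ of $u_\ell^c$, but condition (2) alone pins down only the sum and not the individual terms. I would try to close this gap using condition (1) via a parity argument — the XOR of all entries across $u_\ell^1$ equals $(b \oplus 1) \mult v_1 \bmod 2$ — together with a careful accounting of row multiplicities. If this is not enough in general, I would strengthen the inductive hypothesis to track column sums conditioned on fixed prefixes of column values, or alternatively switch to an induction on $\rows{t_1}$ in which I identify a row that must appear with equal multiplicity in both tables and peel it off, reducing to a smaller instance.
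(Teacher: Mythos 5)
You have correctly pinned down the gap, and your instinct that this is the crux is right. The paper's own proof takes exactly the same column-induction approach (strip the first column and split the remaining rows according to the first bit), and it runs straight into the same issue: it defines the column-sum vectors $\ve{v}_0$ and $\ve{v}_1$ of the sub-tables $t_0,t_1$ in terms of the table being analyzed, and then invokes the inductive hypothesis to conclude that those sub-tables are ``uniquely determined.'' Since $\ve{v}_0$ was read off $t_0$ in the first place, the argument never establishes that $\ve{v}_0$ is a function of $b$, $\ve{v}$, and the row count alone; the proof as written is circular, and that missing step is precisely the obstacle you flagged.

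The obstacle is genuine and cannot be closed: none of the repairs you sketch --- a parity argument, a strengthened prefix-tracking hypothesis, or a row-peeling induction --- can succeed, because the lemma is false as stated. Take $m=4$, $b=0$, $\ve{v}=(1,1,1,1)$, and let $t_1$ have rows $(0,0,1,1)$ and $(1,1,0,0)$ while $t_2$ has rows $(0,1,0,1)$ and $(1,0,1,0)$. Both tables have two rows, every row XORs to $0$, and every column contains exactly one $1$, yet the multisets of rows are disjoint, so $t_1 \not\eqtbl t_2$ (and the example scales by appending any common collection of XOR-$0$ rows to both). The hypotheses simply underdetermine the table once $m \geq 3$, so the parity of column $1$'s sub-table counts cannot in general be recovered from $b$ and $\ve{v}$. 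This also means the downstream use in Lemma~\ref{lem:outsame}, which feeds Lemma~\ref{lem:01batch} exactly the hypotheses given here, needs additional structural facts about the output tables $\splitouts{m}{\ve{b}}{t}$ (beyond equal column sums and constant row XOR) to force $t_1 \eqtbl t_2$; identifying that extra structure, or weakening the conclusion, is where the real work lies.
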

\begin{proof}
  We show that $b$ and $\ve{v}$ uniquely determines the table content
  (up to row permutation) for tables whose row size and column size
  are fixed.  We prove by induction on the column size, $m$.  The case
  $m = 1$ is immediate, since in this case either the table is all
  $0$'s or all $1$'s.

  We show the inductive case.  Let $m > 1$ and $t$ be a table
  satisfying the requirements 1.) and 2.) for $b$ and $\ve{v}$.  Let
  $t_0$ (resp.  $t_1$) be $m-1$ column subtable of $t$ comprising the
  columns $2$, \dots, $m$ of $t$, but only taking the rows $i$ such
  that $t(i,1) = 0$ (resp.~$t(i,1) = 1$).  For each $i \in \aset{2,\dots,m}$,
  let $v_{0i}$ be the number of times $1$ occurs in the $i-1$st column
  of $t_0$ and let $v_{1i} = v_i - v_{0i}$.  Let $\ve{v}_0 =
  v_{02},\dots,v_{0m}$ and $\ve{v}_1 = v_{12},\dots,v_{1m}$. Then,
  $t_0$ (resp.~$t_1$) satisfies the requirement with $b$ (resp.~$\neg
  b$) and $\ve{v}_0$ (resp.~$\ve{v}_1$).  By induction hypothesis,
  $t_0$ and $t_1$ are uniquely determined, and therefore, so is $t$.
\end{proof}
We use the above lemma to show the following, which says that, if a
table is at least $1$ safe, then non-split inputs that map to the
same non-split output value have the same split output valuations.
\begin{lemma}
\label{lem:outsame}
Let $t$ be $\dettbl{x}{m}$ deterministic and $\safe{x}{1}{m}$ safe.
Let $\ve{b},\ve{b}' \in \aset{0,1}^x$.  Let $t_1 =
\splitouts{m}{\ve{b}}{t}$ and $t_2 = \splitouts{m}{\ve{b}'}{t}$.
Then, either $t_1 \eqtbl t_2$, or $\nonsplitvals{t_1}$ and
$\nonsplitvals{t_2}$ have no common elements.
\end{lemma}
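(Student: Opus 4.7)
\medskip
\noindent\textbf{Proof proposal.}
The plan is to split on whether the two sets $\nonsplitvals{t_1}$ and $\nonsplitvals{t_2}$ share an element. If they do not, the conclusion already holds, so all of the work is in the opposite case, where we must show $t_1 \eqtbl t_2$. The tool for this is Lemma~\ref{lem:01batch}: if two tables with the same number of rows and the same row-parity have matching per-column counts of $1$'s, then they agree up to row permutation. So the task reduces to establishing these three ingredients for $t_1$ and $t_2$.

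First I would use the $\dettbl{x}{m}$ determinism of $t$ to turn the hypothesis that $\nonsplitvals{t_1}$ and $\nonsplitvals{t_2}$ share an element into the statement that both are the constant $\aset{c}$ for the same bit $c$. By the definition of $\nonsplitvals{}$ via $\match{1}{m}$, this already gives the row-parity condition: every row of $t_1$ and every row of $t_2$ XORs to $c$. Next, for the same-number-of-rows condition, I would invoke monotonicity of safety to get that $t$ itself is $\safe{x}{0}{m}$ safe, which immediately gives $\rows{\brest{\ve{b}}{m}{t}} = \rows{\brest{\ve{b}'}{m}{t}}$ and hence $\rows{t_1} = \rows{t_2}$, because $\splitouts{m}{\ve{\cdot}}{t}$ is obtained from $\brest{\ve{\cdot}}{m}{t}$ by projecting away a fixed set of columns.

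For the per-column-count condition, the key observation is that the number of $1$'s in the $j$th column of $t_1$ is exactly $\rows{\brest{\ve{b}}{m}{t'}}$, where $t'$ is obtained from $t$ by the single reduction $(c_j, 1)$ with $c_j$ the column of $t$ corresponding to the $j$th output column. Applying $\safe{x}{1}{m}$ safety of $t$, the reduced table $t'$ is $\safe{x}{0}{m}$ safe, so $\rows{\brest{\ve{b}}{m}{t'}} = \rows{\brest{\ve{b}'}{m}{t'}}$, which says the $1$-count in the $j$th column of $t_1$ equals that in the $j$th column of $t_2$. Running this for each of the $m$ output columns gives the hypothesis of Lemma~\ref{lem:01batch}, and we conclude $t_1 \eqtbl t_2$.

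I expect no serious obstacle here: the argument is mostly bookkeeping about what the various table operations $\splitouts{}{}{}$, $\brest{}{}{}$, $\match{}{}$, and $\nonsplitvals{}$ compute, and then a single appeal to safety under one reduction. The slightly delicate point is lining up the equality between the count of $1$'s inside an output column of $t_1$ and the row-count of $t$ after reducing the corresponding column of the full table with bit $1$, which needs to be carried out carefully; but it is a direct unfolding of the definitions.
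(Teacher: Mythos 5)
Your proposal is correct and follows essentially the same route as the paper's proof: use $1$-safety (via a single-column reduction followed by $0$-safety) to match per-column $1$/$0$ counts, use determinism plus the shared-element hypothesis to match row parity, and close with Lemma~\ref{lem:01batch}. You just spell out the reduction-to-$0$-safety step that the paper compresses into one sentence.
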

\begin{proof}  
  Note that, because of $1$ safety, for each $j \in \aset{1,\dots,m}$,
  the number of $1$'s and $0$'s in the $j$th column of $t_1$ must be
  the same as those of $t_2$.  Suppose $\nonsplitvals{t_1}$ and
  $\nonsplitvals{t_2}$ have a common element.  Then, by determinism,
  every row of $t_1$ and $t_2$ sums up to a same value.  Therefore, by
  Lemma~\ref{lem:01batch}, we have $t_1 \eqtbl t_2$.
\end{proof}

Next, we prove a couple of lemmas that focus on the case where the
tables considered are $n+1$ column tables that are $\safe{1}{\_}{n+1}$
safe.  The first lemma, Lemma~\ref{lem:nsafetbl}, is used to prove
Lemma~\ref{lem:main}.  Lemma~\ref{lem:main} is used in a part of the
proof of Theorem~\ref{thm:main} that focuses on the output columns of tables
obtained from $P_1$ and the input columns of the tables obtained from
$P_2$, which will be $n+1$ column tables that are $\safe{1}{\_}{n+1}$
safe.
\begin{lemma}
\label{lem:nsafetbl}
Let $t$ be a table with $\cols{t} = n+1$.  Then, $t$ is $\safe{1}{n}{n+1}$
safe iff 1.) $t$ contains every element of $\aset{0,1}^{n+1}$ as a
row, and 2.) each element of $\aset{0,1}^{n+1}$ appears the same
number of times in $t$.
\end{lemma}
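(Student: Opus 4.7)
The plan is to exploit the fact that after $n$ reductions of an $(n{+}1)$-column table, at most $n$ distinct columns can be pinned, so at least one column is always left \emph{free}. The two complete row patterns that agree on the $n$ pinned coordinates differ only at that free coordinate, hence have opposite XOR parities. The $\safe{1}{0}{n+1}$ safety condition, which equates the number of rows XORing to $0$ with those XORing to $1$, then translates directly into an equality of multiplicities in the original table.

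For the forward direction ($\Rightarrow$), I would fix an arbitrary reduction sequence $(j_1,b_1),\dots,(j_n,b_n)$. If two of these specify the same column with opposite bits the resulting table is empty and trivially $\safe{1}{0}{n+1}$ safe, so I may assume they describe a consistent set of constraints on some set $S$ of $s \le n$ distinct columns. The sharpest information comes from choosing $n$ distinct columns ($s = n$): then the reduced table $t'$ consists of exactly two row patterns $\ve{b}_0,\ve{b}_1 \in \{0,1\}^{n+1}$ differing only at the one remaining column $j^{*}$, whose XORs are opposite. Safety of $t'$ at level $0$ therefore forces the multiplicities of $\ve{b}_0$ and $\ve{b}_1$ in $t$ to coincide. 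Varying $j^{*}$ and the bits assigned to the other $n$ columns shows that the multiplicity function $c : \{0,1\}^{n+1} \to \mathbb{N}$ is invariant under flipping any single coordinate. Since the single-bit-flip graph on $\{0,1\}^{n+1}$ is connected, $c$ is constant. Provided $t$ is non-empty (which is implicit in condition (1)), this constant is positive, yielding both (1) and (2).

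For the reverse direction ($\Leftarrow$), assume each element of $\{0,1\}^{n+1}$ appears exactly $k > 0$ times in $t$, and take any $n$ reductions. As before, either the result is empty (trivially safe) or it consists of all rows of $t$ matching specified bits at some set of $s \le n$ columns. Because $n+1 - s \ge 1$, the surviving row patterns form an affine subspace of $\{0,1\}^{n+1}$ of dimension $\ge 1$; flipping any free coordinate is then a fixed-point-free involution on this set that swaps XOR parity, so the patterns split evenly between XOR $=0$ and XOR $=1$. Since each surviving pattern appears exactly $k$ times in $t$, the reduced table has equal row counts in $\brest{0}{n+1}{\cdot}$ and $\brest{1}{n+1}{\cdot}$, i.e.\ it is $\safe{1}{0}{n+1}$ safe.

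The only real subtlety is the empty-table edge case, which formally breaks the iff (an empty table satisfies safety but not (1)); I would deal with this by reading condition (2) as requiring a common positive multiplicity, consistent with (1). The rest is bookkeeping: the key combinatorial fact is simply that $n$ constraints cannot cover $n+1$ columns, so one coordinate always survives as a parity-switching free direction.
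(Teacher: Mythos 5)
Your proof is correct and follows essentially the same route as the paper's: both arguments pin $n$ of the $n+1$ columns, observe that the two surviving patterns differ only in the remaining coordinate and hence have opposite XOR parities, and then propagate the resulting equal-multiplicity constraints over the Hamming cube via single-bit flips. You are somewhat more careful on two points the paper glosses over, namely a spelled-out argument for the ``trivial'' if-direction (via the fixed-point-free parity-swapping involution on the surviving affine subspace) and the empty-table edge case which technically breaks the stated iff, but the core idea is identical.
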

\begin{proof}
  The if direction is trivial.  We show the only if direction.
  Suppose that $t$ is $\safe{1}{n}{n+1}$ safe.  We show that 1.)  and
  2.) must be satisfied.

  First, we show that 1.) is satisfied.  Let $\ve{b}_0$ be a row of
  $t$.  Then, any $\ve{b}'$ that differs from $\ve{b}_0$ at one entry
  must also appear as a row of $t$ since otherwise the adversary can
  distinguish the 0-restriction and 1-restriction after observing all
  $n$ columns except for that entry.  Because any $\ve{b} \in
  \aset{0,1}^{n+1}$ can be reached from $\ve{b}_0$ via a series of
  one-entry changes, any such $\ve{b}$ must appear as a row of $t$.

  Finally, note that by the same reasoning as above, for any row
  $\ve{b}$ of $t$, any $\ve{b}'$ different from $\ve{b}$ in one entry
  must appear the same number of times as $\ve{b}$.  Hence, 2.)  is
  also satisfied.
\end{proof}
For a table $t$ and $j \in \aset{1,\dots,\cols{t}}$, we say that the
column $j$ is {\em fixed} in $t$ if there exists $b \in \aset{0,1}$
such that for any $i \in \aset{1,\dots,\rows{t}}$, $t(i,j) = b$ (i.e.,
the column is all $b$).  Note that a table is $\safe{1}{m}{\_}$ safe only
if there are at least $m$ unfixed columns.  For a table $t$ and $S
\subseteq \aset{1,\dots,\cols{t}}$, we write $t\rest_S$ for the table
comprising the columns $S$ of $t$.
\begin{lemma}
\label{lem:main}
Let $m_1 \geq 0$, $m_2 \geq 0$ and $n = m_1 + m_2$.  Let $t_1$ and
$t_2$ be $n+1$ column tables such that $t_1$ (resp.~$t_2$) is
$\safe{1}{m_1}{n+1}$ safe (resp.~$\safe{1}{m_2}{n+1}$ safe).  Then,
$\comptbl{t_1}{t_2}$ is $\safe{1}{0}{n+1}$ safe.
\end{lemma}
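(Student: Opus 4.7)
The plan is to reformulate $\safe{1}{m}{n+1}$ safety as a vanishing condition on Fourier coefficients over $\aset{0,1}^{n+1}$; then $\comptbl{t_1}{t_2}$ being $\safe{1}{0}{n+1}$ safe drops out of a one-line convolution computation. Concretely, for $T \subseteq \aset{1,\dots,n+1}$ and $\ve{b} \in \aset{0,1}^{n+1}$ write $\chi_T(\ve{b}) = (-1)^{\sum_{i \in T} b_i}$, let $N_t(\ve{b})$ denote the multiplicity of the row $\ve{b}$ in an $(n+1)$-column table $t$, and set $\hat{N}_t(T) = \sum_{\ve{b}} \chi_T(\ve{b})\, N_t(\ve{b})$.

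The first and main step I would carry out is the following characterization: an $(n+1)$-column table $t$ is $\safe{1}{m}{n+1}$ safe iff $\hat{N}_t(T) = 0$ for every $T \subseteq \aset{1,\dots,n+1}$ with $|T| \ge n+1-m$. By unrolling the inductive definition, $t$ is $\safe{1}{m}{n+1}$ safe iff for every size-$m$ subset $J \subseteq \aset{1,\dots,n+1}$ and every $\ve{a} \in \aset{0,1}^J$, the signed count $h_J(\ve{a}) = \sum_{\ve{b}:\,\ve{b}|_J = \ve{a}} \chi_{\aset{1,\dots,n+1}}(\ve{b})\, N_t(\ve{b})$ vanishes. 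A short Fourier calculation on $\aset{0,1}^J$ shows this holds identically in $\ve{a}$ iff $\hat{N}_t(S\,\triangle\,\aset{1,\dots,n+1}) = 0$ for all $S \subseteq J$. Varying $J$ over size-$m$ subsets and substituting $T = S\,\triangle\,\aset{1,\dots,n+1}$ (which flips $|S| \le m$ into $|T| \ge n+1-m$) yields the stated equivalence.

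Once the characterization is in hand the lemma is immediate. By the definition of $\comptbl{\cdot}{\cdot}$ we have $N_{\comptbl{t_1}{t_2}}(\ve{b}) = N_{t_1}(\ve{b})\, N_{t_2}(\ve{b})$, and a standard Plancherel-type identity on $\aset{0,1}^{n+1}$ gives
\[
\hat{N}_{\comptbl{t_1}{t_2}}(\aset{1,\dots,n+1}) \;=\; \frac{1}{2^{n+1}} \sum_{S_1 \,\triangle\, S_2 = \aset{1,\dots,n+1}} \hat{N}_{t_1}(S_1)\, \hat{N}_{t_2}(S_2).
\]
The pairs $(S_1,S_2)$ with $S_1 \,\triangle\, S_2 = \aset{1,\dots,n+1}$ are exactly the partitions of $\aset{1,\dots,n+1}$, so $|S_1|+|S_2| = n+1$. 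Any nonzero summand would require $|S_1| \le n-m_1$ and $|S_2| \le n-m_2$ by the characterization, giving $|S_1|+|S_2| \le 2n-m_1-m_2 = n < n+1$, a contradiction. Every term thus vanishes, $\hat{N}_{\comptbl{t_1}{t_2}}(\aset{1,\dots,n+1}) = 0$, and by the $m=0$ direction of the characterization this says $\comptbl{t_1}{t_2}$ is $\safe{1}{0}{n+1}$ safe.

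The main obstacle is the Fourier characterization itself: one must handle the inductive definition of $\safe{1}{m}{n+1}$ carefully, including the mild subtlety that a length-$m$ reduction sequence may repeat the same column (either redundantly, collapsing to a smaller $J$, or inconsistently, yielding the trivially safe empty table), and correctly track the bijection $S \mapsto S\,\triangle\,\aset{1,\dots,n+1}$ between the two sides of the equivalence. Once that reformulation is in place, the remainder of the proof is just the convolution identity and the dimension count $n+1 > 2n - m_1 - m_2$.
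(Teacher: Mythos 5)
Your proof is correct, and it takes a genuinely different route from the paper's. The paper argues by simultaneous induction on $n$, $m_1$, $m_2$, splitting on whether $t_1$ and $t_2$ have the same \emph{fixed} (constant) columns: in the matching case it peels off the common fixed columns and recurses on fewer columns, and in the mismatching case it applies a reduction $t_1' \reduce{j}{b} t_1$ at a column that is fixed in one table but not the other, dropping $m_1$ by one and recursing; the base cases ($m_1 = 0$ or $m_2 = 0$) invoke Lemma~\ref{lem:nsafetbl}, which says that a fully safe $(n+1)$-column table is uniform. Your argument is non-inductive: after reformulating $\safe{1}{m}{n+1}$ safety as the vanishing of all Fourier coefficients $\hat{N}_t(T)$ with $|T| \geq n{+}1{-}m$, the result drops out of the fact that $N_{\comptbl{t_1}{t_2}} = N_{t_1} \cdot N_{t_2}$ pointwise, hence $\hat{N}_{\comptbl{t_1}{t_2}}$ is a convolution, and every pair $(S_1, S_2)$ with $S_1 \triangle S_2 = \{1,\dots,n{+}1\}$ has $|S_1| + |S_2| = n{+}1 > n \geq |S_1| + |S_2|$ whenever both factors are nonzero. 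I verified the characterization directly for $m=0,1$ and the general case by writing the indicator $[\ve{b}|_J = \ve{a}]$ as a Fourier sum over $S \subseteq J$, and it checks out; you also correctly flag the only real subtlety (repeated columns in a reduction sequence, handled via Lemma~\ref{lem:safety}). What your approach buys: it is shorter, sidesteps the induction entirely, subsumes Lemma~\ref{lem:nsafetbl} as the special case $|T| \geq 1 \Rightarrow N_t$ constant, and makes the ``budget'' intuition $m_1 + m_2 < n{+}1$ completely transparent as a degree count. What the paper's approach buys: it stays entirely within elementary combinatorics on tables (no spectral machinery), which matches the style of the surrounding lemmas. Note also that the paper's inductive step, as written, does not obviously decrease its measure in the sub-case where neither $t_1$ nor $t_2$ has any fixed column (so $k = 0$), whereas your argument has no such gap.
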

\begin{proof}
  We prove by simultaneous induction on $n$, $m_1$ and $m_2$.  The
  base case where $n=0$, $m_1 = 0$, or $m_2 = 0$ follows from
  Lemma~\ref{lem:nsafetbl} because $m_2 = n$ or $m_1 = n$ in this case.
  \texcomment{For the induction on $n$, the base case $n=1$ follows
    since either $m_1 = 0$ or $m_2 = 0$ in this case.}

  We show the inductive case.  Firstly, consider the case the fixed
  columns of $t_1$ and $t_2$ are the same.  Without loss of
  generality, we assume that the fixed columns have the same values in
  $t_1$ and in $t_2$ since otherwise $\comptbl{t_1}{t_2}$ is empty and
  the result is immediate.  Let $t_1'$ (resp.~$t_2'$) be the sub-table
  of $t_1$ (resp.~$t_2$) comprising the unfixed columns.  Let $k$ be
  the number of fixed columns.  Then, $t_1'$ (resp.~$t_2'$) is
  $\safe{1}{m_1-k}{n-k+1}$ (resp.~$\safe{1}{m_2-k}{n-k+1}$) safe (in
  fact, they are $m_1$ and $m_2$ safe).  Therefore, by induction
  hypothesis, $\comptbl{t_1'}{t_2'}$ is $\safe{1}{0}{n-k+1}$ safe.
  Because $\comptbl{t_1}{t_2}$ is $\comptbl{t_1'}{t_2'}$ but with the
  $k$ fixed columns of $t_1$ (or $t_2$) added, $\comptbl{t_1}{t_2}$ is
  $\safe{1}{0}{n+1}$ safe.
  
  Continuing with the inductive case, we now consider the case the
  fixed columns of $t_1$ and $t_2$ are not the same.  Suppose that
  there exists a column that is fixed in $t_2$ but not in $t_1$ (the
  argument is similar for the case when there is a column that is
  fixed in $t_1$ but not in $t_2$).  Let $j \in \aset{1,\dots,n+1}$ be
  fixed in $t_2$ but not in $t_1$.  Let $b$ be the entry of the $j$th
  column of $t_2$, and let $t_1' \reduce{j}{b} t_1$.  Let $S =
  \aset{1,\dots,n+1}\setminus \aset{j}$.  Then, $t_1'\rest_{S}$ is
  $\safe{1}{m_1-1}{n}$ safe, and $t_2\rest_{S}$ is $\safe{1}{m_2}{n}$
  safe.  Therefore, by induction hypothesis,
  $\comptbl{t_1'\rest_{S}}{t_2\rest_{S}}$ is $\safe{1}{0}{n}$ safe.
  Because $\comptbl{t_1'}{t_2}$ is
  $\comptbl{t_1'\rest_{S}}{t_2\rest_{S}}$ but with an additional all
  $b$ column, $\comptbl{t_1'}{t_2}$ is $\safe{1}{0}{n+1}$ safe.
  Finally, because $\comptbl{t_1}{t_2} = \comptbl{t_1'}{t_2}$, the
  result follows.
\end{proof}

We are now ready to prove Theorem~\ref{thm:main}, restated here.

\begin{reftheorem}{\ref{thm:main}}
  Let $m_1 > 0$ and $m_2 > 0$ such that $m_1 + m_2 = n$.  Let $t_1$
  and $t_2$ be tables such that $t_1$ is $\dettbl{x}{n+1}$
  deterministic and $\safe{x}{m_1}{n+1}$ safe, and $t_2$ is
  $\dettbl{1}{n+1}$ deterministic and $\safe{1}{m_2}{n+1}$ safe.
  Then, $\comptblm{t_1}{t_2}{n+1}$ is $\safe{x}{0}{n+1}$ safe.
\end{reftheorem}
\begin{proof}
  Let $t_1^{\it out}$ be the last $n+1$ columns of $t_1$.  If all rows
  of $t_1^{\it out}$ sum up to the same value, then by
  Lemma~\ref{lem:outsame}, the result follows.

Otherwise, it means that not every row of $t_1$ maps to the same
non-split output.  Then, by Lemma~\ref{lem:outsame} again, for each
non-split output valuation (i.e., $0$ or $1$), the output tables are
equivalent for all inputs that maps to the same output.

Let $\ve{b}, \ve{b}' \in \aset{0,1}^x$ be non-split inputs of $t_1$
that map to different outputs.  That is, there exist rows $i$ and $j$
of $t_1$ with $t_1[i]\match{x}{n+1}\ve{b}$,
$t_1[j]\match{x}{n+1}\ve{b}'$, and $(\bigoplus_{1 \leq \ell \leq n+1}
a_\ell) \neq (\bigoplus_{1 \leq \ell \leq n+1} a'_\ell)$ where $\ve{a}
= \tl{t_1[i]}{n+1}$ and $\ve{a}' = \tl{t_1[j]}{n+1}$.  Let $t_1'$ be
the table comprising the rows of $\splitouts{n+1}{\ve{b}}{t_1}$ and
$\splitouts{n+1}{\ve{b}'}{t_1}$.  Let $t_2^{\it in}$ be the table
comprising the first $n+1$ columns of $t_2$.  Then, for any $m \geq
0$, to show $\comptblm{t_1}{t_2}{n+1}$ is $\safe{x}{m}{n+1}$, it
suffices to show that $\comptbl{t_1'}{t_2^{\it in}}$ is
$\safe{1}{m}{n+1}$ safe (for arbitrary such $t_1'$ constructed from
some $\ve{b}, \ve{b}' \in \aset{0,1}^x$ mapping to different outputs).
Because $t_1$ is $\safe{x}{m_1}{n+1}$ safe, $t_1'$ is
$\safe{1}{m_1}{n+1}$ safe.  Also, because $t_2$ is
$\safe{1}{m_2}{n+1}$ safe, $t_2^{\it in}$ is $\safe{1}{m_2}{n+1}$
safe.  Therefore, by Lemma~\ref{lem:main}, $\comptbl{t_1'}{t_2^{\it
    in}}$ is $\safe{1}{0}{n+1}$ safe, and the result follows.
\end{proof}

\begin{remark}
  \em In the definition of sequential composition, we have left
  unspecified which wire of the $P_1$'s split output is connected to
  which wire of the $P_2$'s split input.  Because the table joining
  operation $\comptblm{\tbl(P_1)}{\tbl(P_2)}{n+1}$ is insensitive to
  how the rows of $\tbl(P_1)$ and $\tbl(P_2)$ are permuted, the proof
  shows that the compositionality result holds regardless of the
  connection choice.
\end{remark}

\subsubsection{Case $P_2$ Has Multiple Inputs}

\label{sec:multinputp2}

We now extend the sequential compositionality result to the case $P_2$ has
multiple inputs.  In this case, we have $y$ many circuits
$P_{11},P_{12},\dots,P_{1y}$ whose outputs will be connected to $P_2$,
and we denote the composed circuit as
$\compcirc{P_{11},P_{12},\dots,P_{1y}}{P_2}$.  The goal is to show the
following.
\begin{reftheorem}{\ref{thm:seqmultinput}}
Let $P_{11},\dots,P_{1y}$ be $n$-leakage-resilient circuits, and $P_2$
be an $y$-input $n$-leakage-resilient circuit, having disjoint
randoms.  Then, $\compcirc{P_{11},\dots,P_{1y}}{P_2}$ is
$n$-leakage-resilient.
\end{reftheorem}
The case follows immediately from the proof shown above for the case
where $P_2$ only has one input.  This is because, for $n=m_1+m_2$ and
each $P_{1i}$, Theorem~\ref{thm:main} implies that attacking $P_{1i}$
$m_2$ times and $P_2$ $m_1$ times and composing them cannot
distinguish the secret inputs given to $P_{1i}$.  Also, because of the
disjointness of the randoms in the components, attacks on $P_{1j}$ for
$j\neq i$ has no effect on the table of $P_{1i}$.  Then, because $m_2$
safety of $P_2$ implies $m_2$ safety of the portion of $\tbl(P_2)$
corresponding to the $i$th input, the result follows.  We remark
that the proof actually shows that the composition $\compcirc{P_{11},\dots,P_{1y}}{P_2}$ can withstand an attack where the adversary observes
$m_2$ nodes of $P_2$ and $m_{1i}$ nodes of each $P_{1i}$ such that
$n \geq m_2 + \max_i (m_{1i})$.

\texcomment{
\section{Details of the Bug in \cite{DBLP:conf/eurocrypt/BartheBDFGS15}}

\label{app:bug}

This section gives details of the bug in the method proposed in the
paper by Barthe et al.~\cite{DBLP:conf/eurocrypt/BartheBDFGS15}.
Their method checks $n$-leakage resilience of the given circuit by
checking probabilistic non-interference of each $n$-tuple of the
circuit observable nodes.  More formally, note that each node
selection by the adversary can be represented by an $n$-tuple
$(e_1,\dots,e_n)$ where each $e_i$ is a Boolean expression on the
circuit's (public, secret, and random) inputs.  Then, their method
iteratively transforms each such tuple until it becomes of the form
$(e_1^m,\dots,e_n^m)$ such that no secrets occur in $e_1^m,\dots,e_n^m$:
\[
(e_1,\dots,e_n) \rightarrow (e_1^1,\dots,e_n^1) \rightarrow \dots \rightarrow (e_1^m,\dots,e_n^m)
\]
If the transformation relation $\rightarrow$ preserves
probabilistic non-interference,\footnote{That is, if $(e_1,\dots,e_n)
  \rightarrow (e_1',\dots,e_n')$ and $(e_1',\dots,e_n')$ is
  probabilistically non-interferent, then $(e_1,\dots,e_n)$ is also
  probabilistically non-interferent.} the approach would be sound.

Unfortunately, the transformation turns out not to preserve
probabilistic non-interference, and the approach is unsound.  The key
idea of their transformation is to replace a sub-expression of the
form $f(e_1,\dots,e_{i-1},\rvar,e_{i+1},\dots,e_\ell)$ with $\rvar$
when $f$ is invertible on the $i$-th argument and $r$ does not occur
in $e_1,\dots,e_{i-1},e_{i+1},\dots,e_\ell$ (cf.~the rule {\sf Opt} in
Fig.~1 of~\cite{DBLP:conf/eurocrypt/BartheBDFGS15}).  As we show
next with a concrete counterexample, the transformation rule is
unsound.  Consider the $2$-leakage resilience case, and let the tuple
$((\svar\oplus \rvar_1\oplus \rvar_2)\oplus\rvar_1,\rvar_2)$ be the
node selection.  Note that this corresponds to the node selection
$(\nd_2,\nd_4)$ for the circuit below:
\[
\begin{array}{rcl}
\nd_1 & \leftarrow &\rvar_1 \\
\nd_2 & \leftarrow &\rvar_2 \\
\nd_3 & \leftarrow & \svar\oplus \rvar_1\oplus \rvar_2 \\
\nd_4 & \leftarrow & \nd_3 \oplus \nd_1 \\
\dots
\end{array}
\]
Because $f(\svar,\rvar_1,\rvar_2) \equiv (\svar\oplus \rvar_1\oplus
\rvar_2)\oplus\rvar_1$ is invertible on the argument $\rvar_2$, the
tuple can be transformed to $(\rvar_2,\rvar_2)$.  Then, as
$(\rvar_2,\rvar_2)$ does not contain a secret, the method concludes
that the attack gains no information about the secret from this
selection (and may conclude that the circuit is $2$-leakage resilient
by reasoning about the other selections analogously).  However, this
is clearly false, because we can recover $\svar$ by summing the tuple
elements.

Roughly, the unsoundness comes from the fact that the transformation
fails to consider dependencies between the replaced expression and the
rest of the tuple.  A possible ``fix'' for the bug is to further
require $\rvar$ to not occur anywhere in the tuple except at the
$i$-th argument of $f$ (i.e., not just not-occurring in
$e_1,\dots,e_{i-1},e_i,\dots,e_\ell$).\footnote{Also, with this
  approach, it is sound to replace
  $f(e_1,\dots,e_{i-1},\rvar,e_{i+1},\dots,e_\ell)$ with any
  expression (e.g., a constant), rather than with $\rvar$.}  However,
this would significantly limit the applicability of the transformation
rule, and hence also that of the overall verification method.
}

}{}

\end{document}